\newtheorem{theorem}{Theorem}[section]
\newtheorem{proposition}[theorem]{Proposition}
\begin{document}

\title{Complexity of Correspondence Homomorphisms}
        
\author[T. Feder]{Tom\'{a}s Feder}
\address{268 Waverley Street \\
               Palo Alto, CA 94301, USA}
\email{tomas@theory.stanford.edu}

\author[P. Hell]{Pavol Hell}
\address{School of Computing Science \\
 	      Simon Fraser University \\
              Burnaby, B.C., Canada V5A 1S6}
\email{pavol@sfu.ca}

\date{}

\maketitle

\begin{abstract}

Correspondence homomorphisms are both a generalization of standard homomorphisms 
and a generalization of correspondence colourings. For a fixed target graph $H$, the 
problem is to decide whether an input graph $G$, with each edge labeled by a pair of 
permutations of $V(H)$, admits a homomorphism to $H$ `corresponding' to the labels, 
in a sense explained below.

We classify the complexity of this problem as a function of the fixed graph $H$. It turns out 
that there is dichotomy -- each of the problems is polynomial-time solvable or NP-complete.
While most graphs $H$ yield NP-complete problems, there are interesting cases of graphs $H$ 
for which the problem is solved by Gaussian elimination.

We also classify the complexity of the analogous correspondence {\em list homomorphism} problems, 
and also the complexity of a {\em bipartite version} of both problems. We emphasize the proofs for the 
case when $H$ is reflexive, but, for the record, we include a rough sketch of the remaining
proofs in an Appendix.
\end{abstract}

\section{Introduction}

Let $H$ be a fixed graph, and let $P(H)$ denote the set of all permutations of $V(H)$.
Let $G$ be a graph with an edge-labeling $\ell : E(G) \to P(H) \times P(H)$. Each edge
$xy$ of $G$ is assumed to have a label $(\pi,\rho)$ where $\pi$ is viewed as associated
with $x$ and $\rho$ associated with $y$. An $\ell$-{\em correspondence homomorphism} 
of $G$ to $H$ of $G$) is a mapping $f : V(G) \to V(H)$ such that $xy \in E(G)$ with 
$\ell(xy)=(\pi,\rho)$ implies $\pi(f(x))\rho(f(y)) \in E(H)$. When $H$ is the irreflexive
(i.e., loopless) complete graph on $k$ vertices, correspondence homomorphisms to
$H$ have been called {\em correspondence $k$-colourings}, and applied to answer a 
question of Borodin on choosability of planar graphs with certain excluded cycles \cite{dp}. 
These colorings have also been called DP-colourings, in honour of the authors of \cite{dp},
and have proved quite interesting from a graph theoretic point of view. cf. \cite{xuding,wang} 
and the references therein. To emphasize this connection, we sometimes call correspondence 
homomorphisms to $H$ also {\em correspondence $H$-homomorphisms}, or {\em correspondence
$H$-colourings}.

The {\em correspondence $H$-homomorphism problem} takes as input a graph 
$G$ with labeling $\ell$ and asks whether or not an $\ell$-correspondence homomorphism to 
$G$ exists. In the {\em correspondence $H$-list-homomorphism problem}, $G$ is also equipped 
with {\em lists} $L(x), x \in V(G)$, each a subset of $V(H)$, and the $\ell$-correspondence 
homomorphism $f$ also has to satisfy $f(x) \in L(x), x \in V(G)$. Clearly, these problems 
are generalizations of the well-known $H$-homomorphism (i.e., $H$-colouring) and list 
$H$-homomorphism (list $H$-colouring) problems respectively. Those problems are 
obtained as special cases when $\ell$ chooses two identity permutations on each edge 
of $G$. They have been studied in \cite{fh,fhh,fhh2,hn}, cf. \cite{hombook}.

There is another way to think of the correspondence homomorphism problem, one that is often 
helpful in the proofs and illustrations. Let $G, \ell$ be an instance of the correspondence 
$H$-homomorphism problem. Construct a new graph $G^*$ by replacing each vertex $x$ of 
$G$ with its own separate copy $V_x$ of the set $V(H)$, with the following edges. If $xy$ is 
an edge of $G$ labelled by $\ell(xy) = (\pi,\rho)$, we join $V_x$ and $V_y$ with the edges 
from $\pi(x)(u) \in V_x$ to $\rho(y)(v) \in V_y$ for all edges $uv$ of $H$. (Recall that  $\pi(x)$ 
is a permutation of $V(H)$, and we view it as also a permutation of $V_x$, and similarly for  
$\rho(y)$.) Then an $\ell$-correspondence homomorphism corresponds precisely to a transversal 
of the sets $V_x, x \in V(G)$, (a choice of exactly one vertex from each $V_x$) which induces in 
$G^*$ an isomorphic copy of $G$. Consider the edges between two adjacent sets $V_x, V_y,$
$xy \in E(G)$. Recall the bipartite graph $H'$ {\em associated with} $H$, in which each 
$v \in V(H)$ yields two vertices $v_1, v_2$ in $H'$ and each edge $uv \in E(H)$ yields two 
edges $u_1v_2, u_2v_1$ in $E(H')$. The edges in $G^*$ between adjacent sets $V_x, V_y$ 
form an isomorphic copy of $H'$, with the part $V_x$ permuted according to $\pi$, and the
part in $V_y$ according to $\rho$. The label $\ell(xy)$ specifies the way $H'$ is layed out 
between the sets $V_x$ and $V_y$.

In this note we focus on the reflexive case, i.e., we assume that $H$ has a loop at every
vertex. While the standard $H$-homomorphism problem is trivial for reflexive graphs, the
correspondence $H$-homomorphism problem turns out to be more interesting. Moreover, it makes 
sense to consider inputs $G$ that may have loops and parallel edges, as the permutation 
constraints on these may introduce significant restrictions. We will use this freedom in Section 
4, to simplify NP-completeness proofs. We will also show, in Section 2, that the complexity
of the problem for graphs with loops and parallel edges allowed or forbidden are the same.

A {\em reflexive clique} is a complete graph with all loops, a {\em reflexive co-clique} is a
set of disconnected loops with no other edges. A disjoint union of cliques $K_p$ and $K_q$
will be denoted by $K_p \cup K_q$ or $2K_p$ if $p=q$, and similarly for more than two
cliques. We use the same symbols for reflexive cliques and irreflexive cliques, and the
right context will always be specified or clear from the context.

Our main result is the following dichotomy classification of both the correspondence 
homomorphism problem and the correspondence list homomorphism problem.

\begin{theorem}\label{main}
Suppose $H$ is a reflexive graph.

If $H$ is a reflexive clique, a reflexive co-clique, or a reflexive $2K_2$, then the correspondence
$H$-homomorphism problem is polynomial-time solvable. In all other cases, the correspondence
$H$-homomorphism problem is NP-complete.

If $H$ is a reflexive clique or a reflexive co-clique, then the correspondence $H$-list-homomorphism 
problem is polynomial-time solvable. In all other cases the correspondence $H$-list-homomorphism 
problem is NP-complete.
\end{theorem}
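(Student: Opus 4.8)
The plan is to treat the tractable cases and the hardness cases separately, and within each to reduce the general situation to a few structural features of $H$.

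For the tractable cases I would first dispose of the two extreme equivalence relations. If $H$ is a reflexive clique then every pair of vertices (including each vertex with itself) is an edge, so $\pi(f(x))\rho(f(y)) \in E(H)$ holds automatically and every instance is a yes-instance; with lists one only checks that all lists are nonempty. If $H$ is a reflexive co-clique then $E(H)$ consists of the loops alone, so each edge constraint reads $\pi(f(x)) = \rho(f(y))$, i.e. $f(y) = \rho^{-1}\pi(f(x))$. Each constraint is thus a bijection between the possible values at its two endpoints, and a solution (with or without lists, and allowing loops and parallel edges in $G$, which merely intersect the bijective constraints) is found by fixing a value at one vertex of each connected component and propagating, trying the at most $|V(H)|$ root values. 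This is polynomial, and it settles both the homomorphism and the list-homomorphism problem for co-cliques.

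The interesting tractable case is the reflexive $2K_2$, which I would solve by Gaussian elimination. Identify $V(H)$ with $\mathbb{F}_2^2$ so that the block-indicator $b : V(H) \to \mathbb{F}_2$, recording which $K_2$ a vertex lies in, becomes a coordinate function, and note that two vertices are adjacent precisely when $b$ agrees on them. For any permutation $\pi$ the map $v \mapsto b(\pi(v))$ is a balanced Boolean function of the two bits of $v$; since every balanced Boolean function on two variables is affine over $\mathbb{F}_2$, this map is affine. Hence each edge constraint $b(\pi(f(x))) = b(\rho(f(y)))$ is a single $\mathbb{F}_2$-affine equation in the two bits of $f(x)$ and the two bits of $f(y)$, and a loop or a parallel edge only adds further affine equations. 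Collecting these, the whole instance becomes a linear system over $\mathbb{F}_2$, solvable in polynomial time; since every assignment of bits corresponds to a genuine vertex, solvability of the system is equivalent to a yes-answer.

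For the hardness side my plan is to reduce from NP-complete constraint problems after using the labels — and, in the non-list case, loops and parallel edges in $G$, legitimate by Section 2 — to carve out a convenient sub-instance of $H$. With lists one may restrict every vertex to a fixed small subset $S \subseteq V(H)$; the realizable binary constraints on $S$ are the relations $\{(s,t) \in S^2 : \pi(s)\rho(t) \in E(H)\}$, and the goal is to choose $S$ and permutations realizing a constraint from which NP-hardness follows, such as the disequality relation on a three-element $S$ (whence a reduction from $3$-colouring) or a non-Schaefer relation on a Boolean $S$. The point is that a single edge together with a non-edge, present whenever $H$ is neither a clique nor a co-clique, supplies the adjacent and non-adjacent pairs needed to build such a constraint; this is what makes everything outside the clique/co-clique boundary hard for the list problem, and in particular it breaks the affine structure of $2K_2$ once lists can pick out a non-affine subset such as three of the four vertices. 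In the non-list homomorphism problem I would simulate lists: a loop at $x$ labelled $(\pi,\rho)$ forces $f(x)$ into $\{v : \pi(v)\rho(v) \in E(H)\}$, and intersecting several such parallel loops yields a family of effective lists, after which the same reductions apply. Here one must simultaneously verify that for the three tractable families these gadgets never produce a hard constraint — automatic for the clique and co-clique, and for $2K_2$ because every loop constraint is again affine in the two bits.

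The main obstacle is the hardness direction, specifically the case analysis needed to cover every reflexive $H$ outside the three tractable families and to locate the delicate boundary. One must show, for instance, that the near-neighbours of $2K_2$ such as $K_2 \cup K_1$, $3K_2$, and $K_3 \cup K_1$ are already NP-complete, even though their edge relations are also equivalence relations, so the dividing line is not merely whether $E(H)$ is an equivalence relation. I expect to organize this by first reducing a general $H$ to a small canonical hard target via the list/loop gadgets above, and then proving NP-completeness of each such target by an explicit reduction from $3$-colouring or a Boolean CSP; assembling these gadgets uniformly, and checking that the tractable families are genuinely excluded at each step, is the technically demanding part.
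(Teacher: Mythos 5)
Your treatment of the three tractable cases is correct and matches the paper's: cliques are trivial, co-cliques reduce to propagating bijections along each component, and for the reflexive $2K_2$ your observation that $b\circ\pi$ is a balanced, hence affine, Boolean function of the two coordinate bits is a clean repackaging of the paper's explicit case check that every edge constraint is one linear equation mod $2$, solved by Gaussian elimination. The hardness direction, however, is a plan rather than a proof, and its central mechanism does not work as stated. You propose to ``simulate lists'' by attaching loops labelled $(\pi,\rho)$, forcing $f(x)$ into $\{v : \pi(v)\rho(v)\in E(H)\}$, and then to ``apply the same reductions'' as in the list case. But the effective lists realizable this way are severely constrained: for a reflexive $H$ the forbidden set of a single loop is $\{v:\pi(v)\not\sim\rho(v)\}$, and in a union of cliques such a gadget can never forbid a single vertex --- it forbids (at least) a pair of vertices lying in two distinct cliques. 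This is precisely why the non-list $2K_2$ problem is polynomial while its list version is NP-complete, so any argument that derives non-list hardness by simulating arbitrary lists must fail somewhere, and you give no criterion for where. The paper's proof lives entirely inside this restriction: it formalizes the loop gadget as a \emph{pair-deletion} (removing one vertex from each of two distinct cliques), which forces separate base cases that pair-deletion cannot reach --- $K_1\cup K_2$ (reduction from 1-IN-3-SAT), $K_1\cup K_t$ (1-IN-$t$-SAT), $2K_3$ (a two-loop gadget), and $K_1\cup K_1\cup K_2$ (an indicator/path-replacement reduction from $K_1\cup K_3$).

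A second concrete omission is that your outline says nothing about connected reflexive $H$ that are not cliques, which cannot be attacked by disconnected-clique gadgets at all. The paper handles these with an edge-subdivision (indicator) construction reducing the correspondence $H^2$-homomorphism problem to the correspondence $H$-homomorphism problem, iterated until the target has diameter two, followed by a vertex-deletion induction for diameter-two graphs and a final ad hoc reduction from $3$-colourability (via parallel edges) for the reflexive path of length two. None of these ingredients appears in your proposal, and your own list of ``near-neighbours of $2K_2$'' to be checked ($K_2\cup K_1$, $3K_2$, $K_3\cup K_1$) is not closed under the reductions you describe, so the case analysis you defer is exactly the substance of the theorem. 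The list-hardness half has a similar gap: you assert that an edge plus a non-edge always yields a non-Schaefer Boolean constraint or a disequality on three elements, but restricting to a two-element non-adjacent set only produces equality and disequality, both affine; the paper instead obtains list-hardness from non-list hardness (full lists) plus the single extra observation that $K_1\cup K_2$ embeds into $2K_2$ via lists.
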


In the last section we state the analogous result for general graphs (with possible loops) 
and for bipartite graphs,
and in the Appendix we provide a rough sketch of the proofs.

\section{Loops and Parallel Edges}

Suppose $H$ is a fixed reflexive graph, and $G$ is a graph with loops and parallel edges 
allowed, with an edge-labeling $\ell$. Thus there are labels on loops, and parallel edges 
may have different labels. We will construct a modified simple graph $G'$ (without loops 
and parallel edges), and a modified edge-labeling $\ell'$ on $G'$ such that $G$ has an
$\ell$-correspondence homomorphism to $H$ if and only if $G'$ has an $\ell'$-correspondence
homomorphism to $H$. The changes from $G, \ell$ to $G', \ell'$ proceed one loop and
one pair of parallel edges at a time.

Suppose $G$ has a loop $xx$ with label $\ell(xx) = (\pi(x),\rho(x))$. Replace $x$ by a
clique with vertices $x_0, x_1, x_2, \dots, x_n$ where $n=|V(H)|$. Each edge $x_ix_j$
will have the same label $\ell(x_ix_j) = (\pi(x),\rho(x))$. Each vertex $x_i$ will have the
same adjacencies, with the same labels, as $x$ did. Call the resulting graph $G_1$ and
the resulting labeling $\ell_1$. Then we claim that $G$ has an $\ell$-correspondence 
homomorphism to $H$ if and only if $G_1$ has an $\ell_1$-correspondence homomorphism 
to $H$. The one direction is obvious, if $f$ is an $\ell$-correspondence homomorphism of $G$
to $H$, then the same mapping, extended to all copies $x_i$ of $x$ is an $\ell_1$-correspondence 
homomorphism of $G_1$ to $H$. Conversely, suppose that $f$ is an $\ell_1$-correspondence 
homomorphism of $G_1$ to $H$. This gives $n+1$ values $f(x_i)$ among the $n$ possible
images $V(H)$. Thus the majority value must appear on at least two distinct vertices $v_i, v_j$,
and therefore $\pi(v_i)\rho(v_j)$ is an edge of $H$. Thus the mapping $F$ which assigns to
$x$ the majority value amongst $f(x_i)$ and equals to $f$ on all other vertices is an $\ell$-
correspondence homomorphism of $G$ to $H$.

Parallel edges are removed by a similar trick using expanders instead of cliques. To be
specific, assume that $(xy), (xy)'$ are two different parallel edges of $G$ joining the same 
vertices $x$ and $y$, with labels $\ell((xy)) = (\pi(x),\rho(y))$ and $\ell'((xy)') = (\pi'(x),\rho'(y))$.
Replace $x$ and $y$ by a large number $N$ of vertices $x_i, y_j$, each having the same
adjacencies and labels to other vertices as $x, y$ respectively, and all edges $x_iy_j$ for
all $i$ and $j$. Some of the edges $x_iy_j$ are labelled by $\ell((xy))$ and the others by 
$\ell'((xy)')$. The set of edges labelled by $\ell((xy))$ defines a bipartite graph $B$ and those
labelled by $\ell'((xy)')$ form its bipartite complement $\overline{B}$. With the right choice of
$B$ we will be able to conclude that between any large sets of $x_i$'s and $y_j$'s there is
at least one edge of $B$ and at least one edge of $\overline{B}$. This allows the above idea
of using majority values to work, namely if $f$ is a correspondence homomorphism on the replaced
graph, we may define $F(x)$ to be the majority value of $f(x_i)$ and similarly for $F(y)$. Then
the two values appear at both an edge labelled by $\ell((xy))$ and an edge labelled by $\ell'((xy)')$,
and so $F$ is a correspondence homomorphism on the original graph $G$.

There are many known proofs that such expanders $B$ do exist \cite{something}. It is also not hard 
to see directly. For instance, if $|V(H)|=n$, let $N > n \log_2 n$ and take for $B$ a random bipartite 
graph on $N$ versus $N$ vertices. The number of ways of choosing sets of size 
$$t = \lceil N/n \rceil > 3 \log_2 n$$
for both sides is $A={{N} \choose{t}}^2\leq {(\frac{eN}{t})}^{2t}\leq 2^{2tlog_2 N}$. The probability that 
the sets will not be joined by a random choice of edges is bounded by $\frac{1}{2^{t^2-1}} < 1/A$, 
giving a positive probability to the existence of a suitable bipartite graph $B$.

\section{Polynomial Cases for Reflexive $H$}

If $H$ is a reflexive clique, the correspondence $H$-homomorphism problem can be trivially solved.
Each vertex $x$ of the input graph $G$ can be assigned any image, and regardless of the 
permutation labels $\ell$, the mapping is a perm-homomorphism. In fact, this observation 
also solves the correspondence $H$-list-homomorphism problem.

If $H$ is a reflexive co-clique, the correspondence $H$-homomorphism problem also has an easy
solution, since every choice of an image for a vertex $x$ of $G$ implies a unique image of any 
adjacent vertex $y$. Thus for each component of $G$ we may try all images of a particular 
vertex $x$, and the component can be mapped to $H$ if and only if one of these images 
produces an $\ell$-consistent homomorphism. This also works for the correspondence list
$H$-homomorphism problem problem.

The most interesting case occurs when $H = 2K_2$. We name the vertices of $H$
by binary 2-strings, $00, 01, 10, 11$. In addition to the loops $00-00, 01-01, 10-10,
11-11$, the two edges of $H$ are, say, $00-01$ and $10-11$. Note that because of
symmetry, there are only three different permutations of $H$, this one, with edges 
$00-01$ and $10-11$, and two more, with edges $00-10$ and $01-11$, or with edges
$00-11$ and $01-10$. Consider the alternate view via the auxiliary graph $G^*$ 
discussed earlier. For each edge $xy$ of an input graph $G$, the edges between 
$V_x$ and $V_y$ depend only on which of the above possibilities apply to each of 
$V_x$ and $V_y$, because of the nature of the adjacencies, in the form of two copies
of $K_{2,2}$. Moreover, it does not matter which vertex on each side of the $K_{2,2}$
is chosen. (Here we use $K_{2,2}$ to denote the {\em irreflexive} complete bipartite
graph with two vertices on each side; see Figure 1.) Let us associate with each vertex 
$x$ of $G$, two $\{0,1\}$ variables $x_a, x_b$, to be understood as describing the two 
coordinates of the name of the chosen vertex for $V_x$. Each of the above partitions 
of $V(H)$ into two edges can now be described by linear equations modulo two:

\begin{itemize}
\item
$00-01$ and $10-11$ correspond to the equations $x_a = 0$ and $x_a = 1$
\item
$00-10$ and $01-11$ correspond to the equations $x_b = 0$ and $x_b = 1$
\item
$00-11$ and $01-10$ correspond to the equations $x_a + x_b = 0$ and $x_a + x_b = 1$
\end{itemize}

In Figure 1, we have the partition on $V_x$ corresponding to the first bullet,
and the partition of $V_y$ corresponding to the second bullet, while the edges of
$2K_{2,2}$ joint the first part of the partition on $V_x$ with the second part of the
partition on $V_y$ and vice versa. To satisfy the constraints of correspondence,
we have to make sure that if $x$ selects a vertex in the part with $x_a = 0$, then 
$y$ selects a vertex in the part with $y_a + y_b = 1$, and if $x$ selects in $x_a = 1$
then $y$ selects in $y_a + y_b = 0$. These constraints can be expressed by the
linear equation $x_a + y_a + y_b = 1$. In all the other cases it is also easy to check 
that there is a linear equation modulo two, which describes the constraints. Thus 
we have reduced the problem of existence of a correspondence $2K_2$-homomorphism 
to a system of linear equations modulo two, which can be solved in polynomial-time 
by Gaussian elimination.

\begin{figure}[hhhh]
\includegraphics[height=6cm]{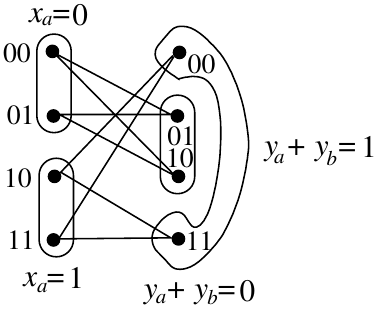}
\caption{The edges are described by the equation $x_a + y_a + y_b = 1$ modulo two}
\end{figure}

It turns out, see the next section, that the list version of the correspondence $2K_2$-homomorphism
problem is NP-complete.

\section{NP-complete Cases for Reflexive $H$}

We begin with the following simple example. Suppose $H$ is $(K_1 \cup K_2)$, the disjoint
union of a reflexive  cliques $K_1$ and $K_2$. Specifically, let $(K_1 \cup K_2)$ have a loop 
on $a$, and two adjacent loops on $b$ and $c$.

\begin{proposition}
The correspondence $(K_1 \cup K_2)$-homomorphism problem is NP-complete.
\end{proposition}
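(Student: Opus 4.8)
The plan is to prove NP-completeness in two steps: note membership in NP, and reduce from \emph{positive one-in-three SAT} (the monotone version of the exactly-one-in-three satisfiability problem), which is NP-complete by Schaefer's dichotomy. Membership in NP is immediate, since a mapping $f : V(G) \to \{a,b,c\}$ is a polynomial-size certificate whose correspondence condition $\pi(f(x))\rho(f(y)) \in E(H)$ can be checked edge by edge in polynomial time.

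The key preliminary step is to pin down exactly which binary relation a single labelled edge imposes. With $a, b, c$ carrying loops and $bc$ the only non-loop edge, $E(H)$ viewed as a relation is the ``same block'' relation of the partition $\{a\} \mid \{b,c\}$. For an edge $xy$ with label $(\pi,\rho)$, the requirement $\pi(f(x))\rho(f(y)) \in E(H)$ holds iff $\pi(f(x))$ and $\rho(f(y))$ lie in the same block; since a vertex's block records only whether it equals $a$, this is precisely the biconditional $[f(x)=s] = [f(y)=r]$ (Iverson brackets, as bits in $\{0,1\}$), where $s = \pi^{-1}(a)$ and $r = \rho^{-1}(a)$. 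I will record this as the central observation: by choosing $\pi,\rho$ suitably, a single edge realizes, and realizes \emph{only}, the constraint $[f(x)=s] \Leftrightarrow [f(y)=r]$, with no side effects. The remaining values of $\pi,\rho$ are irrelevant exactly because the partition does not separate $b$ from $c$.

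With this in hand the reduction is natural once each vertex is read as a one-hot encoding: the three indicators $[f(x)=a], [f(x)=b], [f(x)=c]$ sum to exactly one, so the vertex $x$ is itself a built-in ``exactly one of three'' constraint. Given a positive one-in-three instance with clauses $C_1,\dots,C_m$, I create one vertex $w_i$ per clause $C_i = \{v_{i_a}, v_{i_b}, v_{i_c}\}$, reading $f(w_i)=p$ as ``the variable in position $p$ of $C_i$ is the true one'', so that the one-hot property of $w_i$ is exactly the one-in-three requirement for $C_i$. To keep a variable consistent across its occurrences, whenever a variable $v$ appears in $C_i$ at position $p$ and in $C_j$ at position $q$, I join $w_i,w_j$ by an edge realizing $[f(w_i)=p] \Leftrightarrow [f(w_j)=q]$; chaining these biconditionals over all occurrences of $v$ forces a single global truth value for $v$. (When two clauses share several variables this uses parallel edges, which is harmless by Section 2; alternatively each equality can be routed through a fresh vertex.) A routine check then shows that $G$ admits an $\ell$-correspondence homomorphism to $H$ if and only if the one-in-three instance is satisfiable, and the construction is clearly polynomial.

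The main obstacle, and the only real content, is the relation-identification step of the second paragraph: one must verify that a labelled edge imposes \emph{exactly} the biconditional $[f(x)=s]\Leftrightarrow[f(y)=r]$ and nothing more, so that the gadgets compose without interference; after that, the encoding of one-in-three SAT by one-hot vertices and bit-equalities is direct. This also explains conceptually \emph{why} $H = K_1 \cup K_2$ is already hard: unlike a reflexive clique or co-clique, its single edge splits $V(H)$ into unequal blocks $\{a\}$ and $\{b,c\}$, and pulling this partition back through arbitrary permutations yields precisely the biconditional constraints needed to express one-in-three SAT.
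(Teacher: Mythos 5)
Your proof is correct and follows essentially the same route as the paper: both reduce from positive 1-in-3-SAT via the observation that, since $E(K_1\cup K_2)$ is the ``same block'' relation of the partition $\{a\}\mid\{b,c\}$, a labelled edge enforces exactly the biconditional $[f(x)=\pi^{-1}(a)]\Leftrightarrow[f(y)=\rho^{-1}(a)]$, and both use a clause vertex's one-hot choice among the three vertices of $H$ to encode the exactly-one constraint. The only difference is cosmetic: the paper keeps a separate vertex for each variable (mapped to $a$ iff the variable is true) and joins it to its clause vertices, whereas you link clause vertices directly by equality edges.
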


\begin{proof}
We give a reduction from 1-IN-3-SAT (without negated variables). Consider an instance,
with variables $x_1, x_2, \dots, x_n$ and triples (of variables) $T_1, T_2, \dots, T_m$. 
Construct the corresponding instance of the correspondence $H$-homomorphism problem 
as follows. The instance $G$ will contain vertices $v_1, \dots, v_n$, as well as $T_1, \dots, T_m$,
and all edges are of the form $v_iT_j$ with $v_i$ appearing in the triple $T_j$. The variables
in the triples are arbitrarily ordered, each $T_j$ having a first, second, and third variable.
The edge between $T_j$ and its first variable $v_p$ is labeled so that the special vertex $a$ 
in $V_{v_p}$ is adjacent to $a$ in $V(T_j)$, the edge between $T_j$ and its second variable 
$v_q$ is labeled so that the special vertex $a$ in $V_{v_q}$ is adjacent to $b$ in $V(T_j)$,
and similarly for the third variable and $c$ in $V(T_j)$. (See Figure 2.) Because of the
adjacencies, if any one vertex $x$ is chosen in $V(T_j)$, there is exactly one of its variables
$v_p, v_q, v_r$ that has its special vertex $a$ adjacent to $x$.

\begin{figure}[hhhh]
\includegraphics[height=9cm]{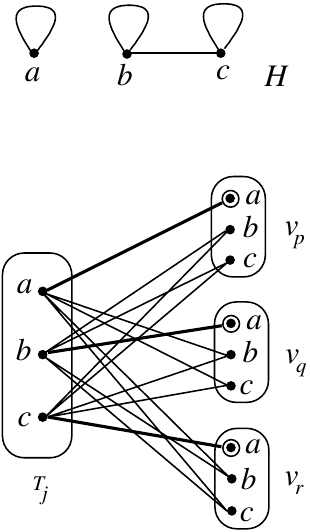}
\caption{The triple $T_j$ with variables $v_p, v_q, v_r$, considered in that order}
\end{figure}

We claim that this instance has a correspondence homomorphism if and only if the original 
instance of 1-IN-3-SAT was satisfiable. Indeed, any satisfying truth assignment sets as true a set 
of $v_i$'s such that exactly one appears in each $T_j$; hence we can set the value of the vertex 
$v_i$ to be $a$ whenever the variable $v_i$ was true in the truth assignment (and, say, $b$ 
otherwise). The value of the vertex $T_j$ will be $a$ if its first variable was true, $b$ if its second
variable was true, and $c$ if its third variable was true. It is easy to see that this defines a 
correspondence $(K_1 \cup K_2)$-homomorphism. For the converse, observe that a
correspondence homomorphism selects one vertex from each $V(T_j)$, which forces
exactly one of its variable to select the value $a$, thus defining a satisfying truth assignment.
\end{proof}

We note that this result implies that the list version of the correspondence $2K_2$-homomorphism 
problem is also NP-complete, since the correspondence $K_1 + K_2$-homomorphism problem reduces
to the correspondence $2K_2$-list-homomorphism problem. Indeed, lists may be used to restrict the 
input vertices never to use one of the four vertices of $2K_2$.

A similar reduction from 1-IN-t-SAT shows the following fact.

\begin{proposition}
The correspondence $(K_1 \cup K_t)$-homomorphism problem is NP-complete, for all $t \geq 2$.
\end{proposition}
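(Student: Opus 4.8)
The plan is to follow the template of the previous proposition, now reducing from positive 1-IN-$(t+1)$-SAT (exactly one true literal per clause, no negated variables). Membership in NP is immediate, since a guessed map $f:V(G)\to V(H)$ can be checked against all edge labels in polynomial time, so all the real work lies in the hardness reduction. I would first record that positive 1-IN-$(t+1)$-SAT is NP-complete precisely when $t+1\ge 3$, i.e. $t\ge 2$, which is exactly the claimed range; the base case $t=2$ recovers the reduction from 1-IN-3-SAT used for $K_1\cup K_2$.

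Name the vertices of $H=K_1\cup K_t$ as the isolated loop $a$ together with the clique vertices $b_1,\dots,b_t$, so $|V(H)|=t+1$, and $N(a)=\{a\}$ while $N(b_k)=\{b_1,\dots,b_t\}$. Given variables $x_1,\dots,x_n$ and clauses $T_1,\dots,T_m$, each a $(t+1)$-tuple of variables in some fixed order, I would build $G$ with a vertex $v_i$ for each variable and a vertex $T_j$ for each clause, joining $v_iT_j$ whenever $v_i$ occurs in $T_j$. The point is that $V(T_j)$ now carries $t+1$ copies, one ``designated value'' per slot: slot $0$ is matched to $a$, and slot $k$ ($1\le k\le t$) to $b_k$. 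Concretely, I would label the edge from $T_j$ to its slot-$0$ variable by a pair of identities, and the edge to its slot-$k$ variable by the identity on the variable side and the transposition $(a\,b_k)$ on the $T_j$ side. A short computation in the auxiliary graph $G^*$ then shows, slot by slot, that the slot variable is forced to the value $a$ exactly when $f(T_j)$ equals the designated value of that slot, and is forced onto a clique vertex (hence $\ne a$) otherwise.

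With these adjacencies in place, the equivalence runs as before. For the forward direction, from a satisfying 1-in-$(t+1)$ assignment I would set $f(v_i)=a$ when $x_i$ is true and $f(v_i)=b_1$ otherwise, and set $f(T_j)$ to the designated value of its unique true slot; checking each clause edge confirms a correspondence homomorphism. For the converse, a correspondence homomorphism selects one value at each $T_j$, and by the slot analysis this forces exactly one of its $t+1$ variables onto $a$; reading ``$f(v_i)=a$'' as ``$x_i$ true'' yields a satisfying assignment, with consistency automatic because a variable shared by two clauses has a single image.

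I expect the main obstacle to be purely bookkeeping: verifying in $G^*$ that the transposition labels really enforce ``exactly one variable equals $a$'' for \emph{every} value of $f(T_j)$, including the cases $f(T_j)=b_j$ with $j\ne k$, and checking that the false variables can be legally assigned a clique vertex simultaneously across all the clauses they appear in. None of these steps is deep, but they must be carried out for each of the three adjacency patterns ($a$ versus $a$, $a$ versus $b_k$, $b_j$ versus $b_k$) to confirm that the 1-in-$(t+1)$ semantics is captured exactly.
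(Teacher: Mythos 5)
Your proposal is correct and follows essentially the same route as the paper, which simply states that ``a similar reduction from 1-IN-$t$-SAT'' (more precisely, from positive 1-in-$(t+1)$-SAT, as you have it, matching the $t+1$ vertices of $K_1\cup K_t$) establishes the result; your slot-$0$/slot-$k$ labels with the transpositions $(a\,b_k)$ are exactly the natural generalization of the $K_1\cup K_2$ gadget, and the slot-by-slot case check goes through as you describe. The only detail you assert without proof --- NP-completeness of positive 1-in-$(t+1)$-SAT for $t\ge 2$ --- is standard and is left implicit in the paper as well.
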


The method we use most often is described in the following proposition.

\begin{proposition} \label{last}
The correspondence $(K_p \cup K_q \cup \dots \cup K_z)$-homomorphism problem reduces
to the correspondence $(K_{p+1} \cup K_{q+1} \cup \dots \cup K_z)$-homomorphism problem.
\end{proposition}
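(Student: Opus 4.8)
The plan is to give a polynomial-time reduction that keeps the same vertex set, embeds $H$ into $H'$, and uses \emph{loops} to suppress the two extra vertices. Write $H = C_1 \cup C_2 \cup \dots \cup C_k$ as a disjoint union of reflexive cliques with $|C_1| = p$ and $|C_2| = q$, and let $H'$ be obtained by adding a new vertex $a_0$ to $C_1$ (forming a clique $C_1'$ of size $p+1$) and a new vertex $b_0$ to $C_2$ (forming $C_2'$ of size $q+1$), leaving all remaining cliques unchanged. Thus $V(H) \subset V(H')$, and, crucially, $H$ is exactly the subgraph of $H'$ induced on $V(H)$: two vertices of $V(H)$ lie in a common clique of $H'$ iff they lie in a common clique of $H$. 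Since both targets are reflexive unions of cliques, ``adjacent'' means ``in a common clique (or equal)'', so a correspondence homomorphism is merely an assignment of values placing $\pi(f(x))$ and $\rho(f(y))$ in a common clique for every edge $xy$ labelled $(\pi,\rho)$.

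The first thing I would record is that a \emph{loop} is a unary constraint: a loop at $x$ labelled $(\pi,\rho)$ forces $f(x)$ into the set $S(\pi,\rho) = \{v \in V(H') : \pi(v) \text{ and } \rho(v) \text{ lie in a common clique of } H'\}$, and imposes nothing else. Since Section~2 permits loops and parallel edges, I am free to attach such constraints, and the heart of the argument is to realize the list $V(H)$ --- that is, to forbid exactly $a_0$ and $b_0$ --- as one such set. I claim the transposition $\tau = (a_0\, b_0)$ together with $\rho = \mathrm{id}$ does this: for $v \in V(H)$ we have $\tau(v)=v$, so $\tau(v)$ and $v$ trivially share a clique; while $\tau(a_0)=b_0 \in C_2'$ and $a_0 \in C_1'$ lie in different cliques, and symmetrically for $b_0$. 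Hence $S(\tau,\mathrm{id}) = V(H') \setminus \{a_0,b_0\} = V(H)$. This is exactly where it matters that the construction enlarges \emph{two} cliques at once: a short counting argument shows that a single extra vertex can never be excluded, since any $(\pi,\rho)$ induces on each of $\pi,\rho$ the same distribution of image-clique indices, so the two image-clique functions cannot differ at just one vertex.

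With this gadget the reduction writes itself. Given an instance $(G,\ell)$ for $H$, I would build $(G',\ell')$ for $H'$ on the same vertex set by (i) keeping every edge $xy$ of $G$, relabelling $\ell(xy)=(\pi,\rho)$ as $(\pi',\rho')$, where $\pi',\rho'$ extend $\pi,\rho$ to $V(H')$ by fixing $a_0,b_0$, and (ii) attaching to every vertex $x$ a loop labelled $(\tau,\mathrm{id})$. This is clearly computable in polynomial time. For correctness I would check both directions. If $f$ is an $\ell$-correspondence homomorphism of $G$ to $H$, then viewing $f$ as a map into $V(H')$ satisfies every loop (as $\tau$ fixes $V(H)$) and every edge (as $\pi'(f(x))=\pi(f(x))$, and the common-clique relation agrees on $V(H)$). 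Conversely, any $\ell'$-correspondence homomorphism $g$ of $G'$ to $H'$ must satisfy the loops, so $g$ maps into $V(H)$; then on each edge $\pi'(g(x))=\pi(g(x))$ and $\rho'(g(y))=\rho(g(y))$ already lie in $V(H)$, and the common-clique relation of $H'$ restricted to $V(H)$ coincides with that of $H$, so $g$ is an $\ell$-correspondence homomorphism to $H$.

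I expect the only real obstacle to be the second step --- producing a label pair whose induced loop-constraint is precisely the list $V(H)$ --- and the transposition $(a_0\, b_0)$ resolves it cleanly; everything else is bookkeeping around the fact that $H = H'[V(H)]$. It is worth noting that the same template (extend the labels to fix the new vertices, then pin every vertex down with a loop realizing the desired list) is exactly the mechanism that allows this reduction to be iterated, building up larger unions of cliques from the base cases $K_1 \cup K_t$.
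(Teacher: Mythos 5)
Your proof is correct and is essentially the paper's own argument: attach to every vertex a loop labelled with the identity and the transposition of the two new vertices, which excludes exactly those two vertices and nothing else. The only differences are cosmetic (you place the transposition in the first coordinate, spell out the extension of the edge labels to $V(H')$, and add a correct side remark on why a single added vertex cannot be excluded by a loop).
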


\begin{proof}
Given an instance $G$ of the $(K_p \cup K_q \cup \dots \cup K_z)$-homomorphism 
problem, we create a new graph $G'$ by adding at each vertex $v$ of $G$ a loop 
$e_v$ (even if $v$ may already have loops), labelled by two permutations $(\pi(e_v),\rho(e_v))$ 
that forbid one vertex of $K_p$ and one vertex of $K_q$. Specifically, suppose we want to forbid 
a vertex $a$ of $K_p$ and a vertex $b$ of $K_q$. Choose $\pi(e_v)$ to be the identity and $\rho(e_v)$ 
to be the involution exchanging $a$ and $b$. Now we claim that $G'$ has a correspondence 
$(K_p \cup K_q \cup \dots \cup K_z)$-homomorphism if and only if the original graph $G$ 
has a correspondence $(K_{p+1} \cup K_{q+1} \cup \dots \cup K_z)$-homomorphism. Indeed, if $f$ 
is a correspondence $(K_p \cup K_q \cup \dots \cup K_z)$-homomorphism for the target graph has
neither $a$ nor $b$, and so, the added loops create no problem, i.e.,  $f$ remains a correspondence 
$(K_{p+1} \cup K_{q+1} \cup \dots \cup K_z)$-homomorphism. On the other hand, if $f$ is a
$(K_{p+1} \cup K_{q+1} \cup \dots \cup K_z)$-homomorphism of $G'$, the label on $e_v$
ensures that $v$ cannot map to $a$ or $b$ as neither $\pi(e_v)(a) = a$ is adjacent to 
$\rho(e_v)(a) = b$ nor $\pi(e_v)(b) = b$ is adjacent to $\rho(e_v)(b) = a$. Thus $f$ is also
a $(K_p \cup K_q \cup \dots \cup K_z)$-homomorphism.
\end{proof}

Proposition \ref{last} allows us to prove the NP-completeness of the correspondence
$(K_{p+1} \cup K_{q+1} \cup \dots \cup K_z)$-homomorphism problem from 
the NP-completeness of the correspondence $(K_p \cup K_q \cup \dots \cup K_z)$-homomorphism 
problem. Let us call the operation of removing a vertex from each of two distinct cliques  
a {\em pair-deletion}. Thus any $H$ that is a union of reflexive cliques that can be 
transformed to $K_1 \cup K_2$ by a sequence of pair-deletions yields an NP-complete
correspondence $H$-homomorphism problem.

We now consider two additional important special cases, starting with $H = 2K_3$ 
consisting of two reflexive triangles. Note that any pair-deletion of $2K_3$ produces
a $2K_2$ which has a polynomial-time solvable problem.

\begin{proposition}
The correspondence $2K_3$-homomorphism problem is NP-complete.
\end{proposition}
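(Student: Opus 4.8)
The problem is clearly in NP: a map $f:V(G)\to V(2K_3)$ is a certificate of size $O(|V(G)|)$ whose validity is tested by inspecting each labelled edge. For hardness, the obstacle flagged above is that every pair-deletion of $2K_3$ produces the tractable $2K_2$, so Proposition \ref{last} is of no use here. Instead, the plan is to reduce directly from the correspondence $(K_1\cup K_2)$-homomorphism problem, already shown to be NP-complete, by exploiting the fact that $2K_3$ contains a reflexive $K_2\cup K_1$ as an induced subgraph. Name the vertices of $2K_3$ by $a_1,a_2,a_3$ and $b_1,b_2,b_3$, the two reflexive triangles. Since adjacency in $2K_3$ means ``lying in a common triangle'' (loops included), the three vertices $a_1,a_2,b_1$ induce exactly a reflexive $K_2\cup K_1$: the edge $a_1a_2$, all three loops, and $b_1$ otherwise isolated.

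The heart of the reduction is to confine every input vertex to the list $\{a_1,a_2,b_1\}$ using only loops, which are permitted by our conventions (Section 2). A loop at $x$ labelled $(\sigma,\tau)$ is satisfied precisely when $\sigma(f(x))$ and $\tau(f(x))$ lie in a common triangle; writing $S=\sigma^{-1}(\{a_1,a_2,a_3\})$ and $S'=\tau^{-1}(\{a_1,a_2,a_3\})$, this happens exactly when $[f(x)\in S]=[f(x)\in S']$, i.e. when $f(x)$ lies in the agreement set $(S\cap S')\cup(S^c\cap S'^c)$. As $\sigma,\tau$ range over all permutations, $S,S'$ range over all $3$-subsets, and the agreement set has size $2(6-|S\cup S'|)$; taking $|S\cap S'|=2$ one sees that every prescribed $4$-element subset is realized by a single loop. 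Intersecting two such loops whose $4$-subsets meet in $\{a_1,a_2,b_1\}$ (for instance $\{a_1,a_2,b_1,a_3\}$ and $\{a_1,a_2,b_1,b_2\}$) therefore pins $f(x)$ to this $3$-element list.

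Given an instance $(G,\ell)$ of the correspondence $(K_1\cup K_2)$-homomorphism problem, with $K_1\cup K_2$ on $\{a,b,c\}$ ($a$ the loop, $bc$ the edge), I would build $(G',\ell')$ on the same vertex set by attaching the two list-loops above at each vertex and relabelling every edge. Identify $a\mapsto b_1$, $b\mapsto a_1$, $c\mapsto a_2$; each permutation $\pi$ of $\{a,b,c\}$ then extends to a permutation $\pi'$ of $V(2K_3)$ that agrees with this identification on $\{a_1,a_2,b_1\}$, which it preserves setwise, and is arbitrary on $\{a_3,b_2,b_3\}$. Replace $\ell(xy)=(\pi,\rho)$ by $\ell'(xy)=(\pi',\rho')$. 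Once the list-loops force $f'(x),f'(y)\in\{a_1,a_2,b_1\}$, both $\pi'(f'(x))$ and $\rho'(f'(y))$ remain inside $\{a_1,a_2,b_1\}$, where $2K_3$-adjacency coincides with $(K_1\cup K_2)$-adjacency under the identification. Hence $f\mapsto f'$ and its inverse give a bijection between solutions, completing a polynomial-time reduction.

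The main thing to get right is the loop-based list simulation: one must verify the agreement-set size formula, check that the two chosen $4$-subsets intersect in exactly $\{a_1,a_2,b_1\}$, and then confirm that extending the permutations arbitrarily off the list neither creates nor destroys adjacencies \emph{between list-vertices}. Everything else is bookkeeping. Should one prefer to avoid the list machinery, an alternative is a direct 1-IN-3-SAT reduction in the style of the first Proposition of this section, using one triangle of $2K_3$ to record which of a triple's three variables is true; but the reduction from $(K_1\cup K_2)$ above is shorter and reuses work already done.
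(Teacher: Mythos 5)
Your proof is correct and follows essentially the same route as the paper's: both reduce from the correspondence $(K_1\cup K_2)$-homomorphism problem by attaching two loops at each vertex whose labels restrict the image to a $3$-element subset of $2K_3$ inducing a reflexive $K_2\cup K_1$ (the paper forbids $a,a',b'$ via identity/involution pairs, which is exactly a special case of your ``agreement set'' construction). Your systematic characterization of which allowed sets a single loop can realize, and your explicit extension of the permutations from $\{a,b,c\}$ to $V(2K_3)$, are slightly more careful than the paper's presentation but do not change the argument.
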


\begin{proof}
Assume $H = 2K_3$ has triangles $abc$ and $a'b'c'$. We add to each vertex $v$ of $G$ 
two separate loops $e_v, e'_v$ with labels designed to make it impossible to map $v$ to 
$a$ or $a'$ or $b'$. This reduces the NP-complete correspondence $(K_1 \cup K_2)$-homomorphism 
problem to the correspondence $2K_3$-homomorphism problem. The first loop $e_v$ will have the label 
$(\pi(e_v),\rho(e_v)$ where $\pi(e_v)$ is the identity and $\rho(e_v)$ is the involution exchanging
$a$ and $a'$.  The second loop $e'_v$ will have the label $(\pi(e'_v),\rho(e'_v))$ where $\pi(e'_v)$ 
is the identity and $\rho(e'_v)$ is the involution exchanging $a$ and $b'$. Now we claim that the 
resulting graph $G'$ has a correspondence $(K_1 \cup K_2)$-homomorphism for the target graph on 
$b, c, c'$ (isomorphic to $K_1 \cup K_2$) if and only if the original graph $G$ has a correspondence
$2K_3$-homomorphism. If $f$ is a correspondence $(K_1 \cup K_2)$-homomorphism for the target 
graph on $b, c, c'$, neither of loops creates a problem, and $f$ remains correspondence to $H$. The
converse follows from the fact that the first loop forbids $a$ and $a'$ for any vertex $v$, since
neither $\pi(e_v)(a) = a$ is adjacent to $\rho(e_v)(a) = a'$ nor $\pi(e_v)(a') = a'$ is adjacent to 
$\rho(e_v)(a') = a$, and similarly the second loop $e'v$ forbids $a$ and $b'$.
\end{proof}

The second example is the union of two isolated loops $a, b$ and two adjacent loops $c, d$,
i.e., the graph $H = K_1 \cup K_1 \cup K_2$. It also does not admit a pair-deletion that
results in $K_1 \cup K_2$.

\begin{proposition}
The correspondence $(K_1 \cup K_1 \cup K_2)$-homomorphism problem is NP-complete.
\end{proposition}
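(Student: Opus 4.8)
The plan is to give a direct reduction from monotone 1-IN-3-SAT, in the spirit of the proof for $K_1 \cup K_2$, rather than trying to apply the pair-deletion method of Proposition \ref{last}. A fresh reduction is forced on us for the reason flagged just before the statement: no pair-deletion of $K_1 \cup K_1 \cup K_2$ produces $K_1 \cup K_2$, and in fact one checks directly that \emph{no} system of loop-labels can restrict a vertex to the three-element set $\{a,c,d\}$ inducing a $K_1\cup K_2$. Indeed, restricting to $\{a,c,d\}$ would require the forbidden set of the labels to be exactly $\{b\}$; but a single loop can never forbid only $b$, and since adding loops only intersects the allowed value-sets, the set $\{a,c,d\}$ is unreachable. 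Thus the surplus isolated vertex $b$ cannot be eliminated by loops and must instead be excluded by the clause gadget itself.

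Concretely, I would start from a 1-IN-3-SAT instance with variables $v_1,\dots,v_n$ and triples $T_1,\dots,T_m$, and build $G',\ell'$ as follows. Each variable $v_i$ becomes a vertex carrying a loop labelled $(\mathrm{id},(b\,d))$; exactly as in Proposition \ref{last} this loop performs the pair-deletion forbidding $b$ and $d$ (neither $b\cdot d$ nor $d\cdot b$ is an edge of $H$), so $v_i$ is restricted to $\{a,c\}$, and I read $f(v_i)=a$ as \emph{true} and $f(v_i)=c$ as \emph{false}. Each triple $T_j$ becomes a vertex with no restriction, so its image ranges over all of $\{a,b,c,d\}$. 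For the first, second, and third variable of $T_j$ I add an edge from $T_j$ to that variable, labelled $(\mathrm{id},\mathrm{id})$, $((a\,c),\mathrm{id})$, and $((a\,c\,d),\mathrm{id})$ respectively, where the first permutation of each pair acts on the $T_j$-side and the second on the variable side.

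The key verification, which I expect to be the main obstacle, is that these three edges \emph{simultaneously} force ``exactly one of the three variables is true'' and forbid $f(T_j)=b$. Using that $a,b$ are isolated while $c,d$ span the edge $cd$, a short case analysis on $f(T_j)$ gives the whole gadget: if $f(T_j)=a$ then the three edges force the first variable to $a$ and the other two to $c$; if $f(T_j)=c$ the second variable is forced true and the rest false; if $f(T_j)=d$ the third is forced true and the rest false; and if $f(T_j)=b$ then already the first edge $(\mathrm{id},\mathrm{id})$ needs $b$ adjacent to a value in $\{a,c\}$, which is impossible, so $f(T_j)=b$ cannot occur. Hence the admissible values $a,c,d$ of $T_j$ act as a pointer recording which of its three variables is the unique true one, which is exactly the 1-IN-3-SAT condition for that clause.

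The remaining directions are then routine. A satisfying assignment yields a correspondence homomorphism by setting true variables to $a$, false variables to $c$, and each $T_j$ to the pointer of its unique true variable; conversely, any correspondence homomorphism of $G'$ gives a satisfying assignment, since the loops keep every $v_i$ in $\{a,c\}$ and the clause gadget enforces exactly one true variable per triple. As membership in NP is immediate, NP-completeness follows. The single step to watch throughout is that the exclusion of the extra vertex $b$ is carried out by the clause edges rather than by a loop; this is precisely the feature distinguishing $K_1\cup K_1\cup K_2$ from the unions of reflexive cliques that reduce to $K_1\cup K_2$ by pair-deletion.
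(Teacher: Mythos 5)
Your proof is correct, and it takes a genuinely different route from the paper's. The paper proves this proposition by an indicator construction: it replaces each edge of an instance of the correspondence $(K_1 \cup K_3)$-homomorphism problem by a labelled path of length three whose realizable endpoint-pairs are exactly the adjacencies of $K_1 \cup K_3$ (never $a$ with another vertex), thereby reducing from the already-established NP-complete case $K_1 \cup K_t$ with $t=3$. You instead build a single clause gadget directly from monotone 1-IN-3-SAT: the loop $(\mathrm{id},(b\,d))$ correctly pins each variable vertex to $\{a,c\}$ (since $bd$ and $db$ are non-edges while $aa$ and $cc$ are loops), and the three edge labels $(\mathrm{id},\mathrm{id})$, $((a\,c),\mathrm{id})$, $((a\,c\,d),\mathrm{id})$ make the permuted image of $f(T_j)$ equal to $a$ on exactly one of the three edges when $f(T_j)\in\{a,c,d\}$, and infeasible when $f(T_j)=b$; I checked the twelve cases and the gadget does enforce exactly-one-true, with $b$ excluded by the edges rather than by any loop. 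Your preliminary observation that no collection of loop labels can carve out the three-element set $\{a,c,d\}$ is also correct (a single loop admits $0$, $1$, $2$, or $4$ values, never $3$, and intersections cannot produce $3$), and it explains why a fresh gadget is needed here, complementing the paper's remark that no pair-deletion yields $K_1 \cup K_2$. What each approach buys: yours is self-contained and avoids relying on the $(K_1 \cup K_t)$ proposition, whose proof the paper only sketches as ``a similar reduction from 1-IN-$t$-SAT''; the paper's version is less computational and illustrates the general indicator-construction machinery that it reuses elsewhere (e.g.\ for the square $H^2$ reduction). Both arguments implicitly use that loops are permitted on input graphs, which Section 2 justifies.
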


\begin{proof}
We reduce the NP-complete correspondence $(K_1 \cup K_3)$-homomorphism problem to the 
correspondence $(K_1 \cup K_1 \cup K_2)$-homomorphism problem. Suppose $G, \ell$ is an instance 
of the correspondence $(K_1 \cup K_3)$-homomorphism problem with the target graph consisting of
the following vertices inducing $K_1 \cup K_3$: an isolated loop at $a$ and a reflexive triangle $bcd$.
We form a new graph $G'$ and labeling $\ell'$ by replacing each edge $xy$ of $G$ by a path $xx', x'y', y'y$
so that if $\ell(xy) = (\pi,\rho)$ then $\ell'(xx') = (\pi,\pi')$, $\ell'(x'y') = (\pi,\pi')$, and $\ell'(y'y) = (\pi,\rho')$
where $\pi'$ is obtained by composing $\pi$ with the involution exchanging $b$ and $c$, and $\rho'$ by composing
$\rho$ with the same involution of $b$ and $c$. (See Figure 3). We observe that the path $xx', x'y', y'y$
admits a corresponding homomorphism to $K_1 \cup K_1 \cup K_2$ taking $x$ and $y$ to any of the
pairs $aa, bb, cc, dd, bc, cb, bd, db, cd, dc$, but never $a$ with another vertex. It is easy to see that
this implies that $G$ admits an $\ell$-correspondence homomorphism to $K_1 \cup K_3$ on $a, b, c, d$ 
if and only if $G'$ admits an $\ell'$-correspondence homomorphism to $K_1 \cup K_1 \cup K_2$ with edge 
$cd$ (and all loops). (This is a correspondence version of the 'indicator construction' from \cite{hn,hombook}, 
where the method is discussed in more detail.)

\begin{figure}[hhhh]
\includegraphics[height=6cm]{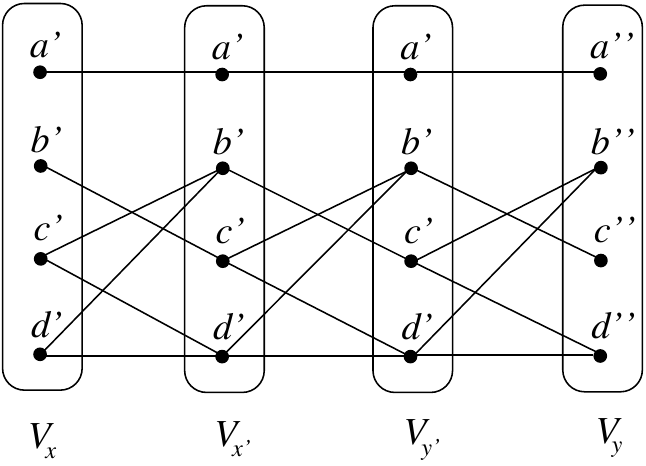}
\caption{A replacement for the edge $xy$ labeled by $\pi=a'b'c'd'$ and $\rho=a''b''c''d''$}
\end{figure}

It is now easy to check that any union of reflexive cliques other than a single clique,  a union of
disjoint $K_1$'s (i.e., a co-clique), or $2K_2$, can be transformed by pair-deletions to one of
the NP-complete cases $K_1 \cup K_2, K_1 \cup K_1 \cup K_2, 2K_3$, and hence they are 
all NP-complete.
\end{proof}

Now consider a reflexive target graph $H$ that is not a union of cliques. The {\em square} $H^2$ of a
graph $H$ has the same vertex-set $V(H^2)=V(H)$, and two vertices are adjacent in $H^2$ if and only
if they have distance at most two in $H$. The following observation is useful.

\begin{proposition}\label{indi}
The correspondence $H^2$-homomorphism problem reduces to the correspondence $H$-homomorphism 
problem.
\end{proposition}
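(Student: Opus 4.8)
The plan is to mimic the indicator construction used above for $K_1 \cup K_3$, replacing each edge by a short path whose middle vertex plays the role of the midpoint of a walk of length two in $H$. The crucial observation, and the place where reflexivity is used, is the following reformulation of adjacency in the square: since every vertex of $H$ carries a loop, two vertices $u,v$ satisfy $uv \in E(H^2)$ (i.e., $d_H(u,v) \le 2$) if and only if there exists $w \in V(H)$ with $uw \in E(H)$ and $wv \in E(H)$. Indeed, if $d_H(u,v) = 2$ we take $w$ on a shortest path; if $d_H(u,v) \le 1$ we take $w = u$, using the loop at $u$ to get $uw \in E(H)$; and conversely the existence of such a $w$ forces $d_H(u,v) \le 2$. (Without loops this equivalence fails, e.g.\ on a single edge.)

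Given an instance $(G,\ell)$ of the correspondence $H^2$-homomorphism problem, which by Section 2 we may assume to be a simple graph, I would construct an instance $(G',\ell')$ of the correspondence $H$-homomorphism problem as follows. For each edge $xy$ of $G$ with $\ell(xy) = (\pi,\rho)$, introduce a new private vertex $z_{xy}$ and replace the edge $xy$ by the path $x\,z_{xy}\,y$, setting $\ell'(xz_{xy}) = (\pi,\mathrm{id})$ and $\ell'(z_{xy}y) = (\mathrm{id},\rho)$. Since each edge receives its own midpoint, every $z_{xy}$ has degree two and carries no other constraints, and the construction is clearly polynomial.

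Finally I would verify the equivalence. With the labels above, a map $f$ meets the constraint on $xz_{xy}$ exactly when $\pi(f(x))\,f(z_{xy}) \in E(H)$, and the constraint on $z_{xy}y$ exactly when $f(z_{xy})\,\rho(f(y)) \in E(H)$. Hence a choice of $f(z_{xy}) = w$ completing the path exists if and only if $\pi(f(x))$ and $\rho(f(y))$ have a common neighbour $w$ in the reflexive sense, which by the opening observation is precisely the condition $\pi(f(x))\,\rho(f(y)) \in E(H^2)$ required by $\ell$ at $xy$. Therefore any $\ell$-correspondence homomorphism of $G$ to $H^2$ extends (vertex by vertex, choosing each $w$) to an $\ell'$-correspondence homomorphism of $G'$ to $H$, and any $\ell'$-correspondence homomorphism of $G'$ restricts on $V(G)$ to an $\ell$-correspondence homomorphism of $G$ to $H^2$. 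The main thing to get right is the threading of the permutations through the midpoint --- placing $\pi$ on the $x$-end and $\rho$ on the $y$-end while leaving the midpoint unpermuted --- so that the two edge constraints compose to exactly the distance-$\le 2$ condition in $H$; once the common-neighbour reformulation is in hand, the rest is routine.
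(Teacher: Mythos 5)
Your construction is exactly the paper's: subdivide each edge $xy$ into $xz,zy$ with labels $(\pi,\mathrm{id})$ and $(\mathrm{id},\rho)$, and use the fact that (by reflexivity) $\pi(f(x))$ and $\rho(f(y))$ have a common neighbour in $H$ precisely when they are adjacent in $H^2$. The argument is correct and matches the paper's proof, with the indicator-construction step spelled out in more detail than the paper bothers to.
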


\begin{proof}
Let $G'$ be obtained from $G$ by subdividing each edge $xy$ to be two edges $xz, zy$. If $\ell(xy) = (\pi,\rho)$
then the label of $xz$ is $(\pi,1)$, and the label of $zy$ is $(1,\rho)$. Now it can be seen that $G$ has a 
correspondence $H^2$-homomorphism if and only if $G'$ has a correspondence $H$-homomorphism. 
(We again apply the logic of the indicator construction \cite{hombook}, since the path $xz, zy$ can map 
$x$ and $y$ to any edge of $H^2$.)
\end{proof}

It follows that if $H$ is not connected, we can deduce the NP-completeness of the correspondence
$H$-homomorphism problem from the above results on the union of reflexive cliques (with the
only exceptions of $2K_2$ and $tK_1$). For connected $H$, we can apply Proposition \ref{indi}
to a sufficiently high power of $H$ that has diameter two.

\begin{proposition}\label{two}
If $H$ has diameter two but is not the reflexive path of length two, then he correspondence 
$H$-homomorphism problem is also NP-complete.
\end{proposition}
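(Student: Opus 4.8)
Since $H$ has diameter two, its square $H^2$ is a reflexive clique, so Proposition \ref{indi} yields nothing and a genuinely different gadget is needed. My plan is to use the diameter-two hypothesis to build a loop that deletes a single, \emph{arbitrary} vertex of $H$, and thereby reduce the problem on $H$ to the problem on any induced subgraph. First fix a non-edge $\{u,w\}$ of $H$ (one exists, as $H$ is not a clique) and, invoking diameter two, a common neighbour $x$ of $u$ and $w$; let $\alpha$ be the $3$-cycle $(u\,x\,w)$. A direct check gives $\alpha(g)\in N[g]$ for every $g\neq w$, whereas $\alpha(w)=u\notin N[w]$ because $uw\notin E(H)$. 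To forbid a prescribed vertex $z$, I would choose any permutation $\pi$ with $\pi(z)=w$ and set $\rho=\alpha\pi$; the loop at a vertex $v$ labelled $(\pi,\rho)$ then imposes exactly $\alpha(\pi(f(v)))\in N[\pi(f(v))]$, i.e.\ $\pi(f(v))\neq w$, i.e.\ $f(v)\neq z$. Crucially this works for \emph{every} $z$, universal vertices included, since we are free to route $z$ onto $w$ through $\pi$. Attaching one such loop per forbidden vertex (the loops do not interact, each merely excluding its own target) and extending all edge-labels by the identity on the deleted set, I obtain, for any induced subgraph $H'$ of $H$, a polynomial reduction of the correspondence $H'$-homomorphism problem to the correspondence $H$-homomorphism problem.

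It then remains to exhibit, inside every diameter-two reflexive $H\neq P_3$, an induced subgraph already known to be NP-complete. I would split on whether $H$ is complete multipartite. Note that $H$ fails to be complete multipartite exactly when its complement contains an induced $P_3$, equivalently when $H$ contains an induced $K_1\cup K_2$ (a vertex non-adjacent to both ends of an edge). In that case, deleting every other vertex by the gadget above reduces to the correspondence $(K_1\cup K_2)$-homomorphism problem, already shown NP-complete; this settles all non-multipartite $H$.

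For complete multipartite $H=K_{n_1,\dots,n_k}$ of diameter two (so $k\ge 2$ and some $n_i\ge 2$) there is no induced $K_1\cup K_2$, and a short part-size analysis shows that such an $H$ other than $P_3=K_{2,1}$ always contains at least one of $C_4=K_{2,2}$, the star $K_{1,3}$, or the diamond $K_{2,1,1}$ as an induced subgraph (according as two parts have size $\ge2$, some part has size $\ge3$, or a single size-$2$ part is accompanied by at least two singleton parts). By the deletion reduction it then suffices to handle these three minimal graphs directly. For $C_4$ this is clean: writing its parts as $\{1,3\},\{2,4\}$ and letting $\mu=(1\,3)(2\,4)$ swap within each part, the label $(\mu,\mathrm{id})$ realises $\{(a,b):b\in N[\mu(a)]\}=\{(a,b):a\neq b\}$, so labelling every input edge by $(\mu,\mathrm{id})$ turns the problem into graph $4$-colouring, whence $C_4$ is NP-complete. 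The star and the diamond call for analogous direct reductions from a SAT-variant. I also note the conceptual point behind the exclusion of $P_3$: every diameter-two non-clique contains the induced $P_3$ formed by a non-edge together with a common neighbour, so the gadget in fact reduces \emph{every} such $H$ to $P_3$; thus $P_3$ is the irreducible base case, which is exactly why it is omitted from the statement.

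The main obstacle is precisely this complete-multipartite regime. There the forbidding gadget only ever reaches an induced $P_3$, never an induced union of cliques, so the earlier union-of-cliques results cannot be invoked and one is pushed into hands-on reductions. Concretely, the delicate steps are (i) checking that the three-case part-size analysis is exhaustive, and (ii) constructing correct correspondence reductions from $1$-in-$3$-SAT (or $3$-colouring) for $K_{1,3}$ and $K_{2,1,1}$, where, unlike $C_4$, no single relabelling collapses the edge constraint to plain disequality. Equivalently, one pays this cost just once by establishing the base case $P_3$ directly and then letting the deletion gadget propagate NP-completeness to all diameter-two targets.
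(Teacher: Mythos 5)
Your vertex-deletion gadget is correct, and is in fact cleaner than what the paper uses: the loop labelled $(\pi,\alpha\pi)$ with $\alpha=(u\,x\,w)$ does forbid exactly the vertex $\pi^{-1}(w)$ and nothing else, and since $\pi$ is arbitrary this deletes any prescribed vertex, giving a reduction from the correspondence $H'$-homomorphism problem to the correspondence $H$-homomorphism problem for every induced subgraph $H'$ of $H$. (The paper achieves the same effect less uniformly, via a three-way case analysis on whether $H-a$ or $H-c$ is a clique, with two different loop labels.) Your non-multipartite branch is complete --- an induced reflexive $K_1\cup K_2$ plus the gadget reduces from the already-established NP-complete case --- and the $C_4$ argument via the label $(\mu,\mathrm{id})$ and $4$-colourability is also correct.

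The genuine gap is in the complete-multipartite branch, for the reflexive star $K_{1,3}$ and the reflexive diamond $K_{2,1,1}$: you defer these to ``analogous direct reductions'' that you never supply, and they are not analogous --- as you yourself observe, no single relabelling collapses their edge constraint to disequality, and your deletion gadget applied to either graph only ever reaches a clique, a co-clique, or a reflexive $P_3$, never a disconnected union of cliques already known to be hard. So as written the proof does not close on exactly these two targets. The repair is the route you gesture at in your final sentence but do not commit to: every diameter-two non-clique contains an induced reflexive $P_3$ (a non-edge together with a common neighbour), so your gadget reduces the correspondence $P_3$-homomorphism problem to the correspondence $H$-homomorphism problem for \emph{all} such $H$ at once, making the entire multipartite case analysis unnecessary; what is then indispensable is a direct NP-completeness proof for the reflexive $P_3$ itself. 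The paper supplies precisely this as a separate proposition (a reduction from $3$-colourability using two parallel edges, one labelled by identities and one by the involution exchanging $a$ and $b$, parallel edges being removable by the construction of Section 2), and its own induction in the present proposition likewise bottoms out at that base case. Without the $P_3$ base case, neither your main line nor your alternative route is complete.
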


\begin{proof}
There must, in $H$, be a path $ab, bc$ where $a$ and $c$ are not adjacent. Suppose first
that $H - a$ is not a clique. Then the correspondence $(H-a)$-homomorphism problem can
be assumed NP-complete by induction (on $|V(H)|$), and it reduces to the correspondence 
$H$-homomorphism problem as follows. Suppose $G$ is an instance of the 
$(H-a)$-homomorphism problem, and form $G'$ by adding at each vertex of $G$ a loop 
labelled $(\pi,\rho)$ where $\pi$ is the identity and $\rho$ is the cyclic permutation $(a,b,c)$.
The effect of these loops is to prevent any vertex from mapping to $a$, as $aa$ is not equal
to $\pi(u)\rho(v)$ for any edge $uv \in E(H)$ (but $bb$ and $cc$ are, and even though 
$cb$ is not an edge $bc$ is an edge, which is sufficient for a loop). If $H - c$ is not a
clique we proceed analogously. Otherwise there exists a vertex $d$ adjacent to $a, b, c$.
In this case we can add a loop to each vertex of $G$ that effectively deletes $b$ and we 
can again apply induction on $|V(H)|$. These loops will have labels $(\pi,\rho)$ where 
$\pi$ is the involution of $a$ and $b$ and $\rho$ is the involution of $b$ and $c$. (See 
Figure 4.) The argument is similar, noting that only $bb$ is missing, because $ca$ is
also equal to $ac$ for a loop.
\end{proof}

\begin{figure}[hhhh]
\includegraphics[height=5cm]{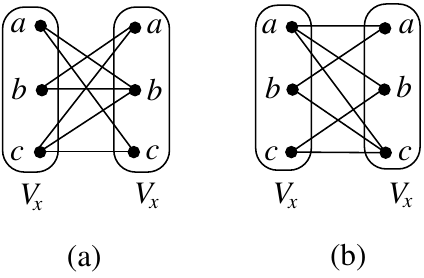}
\caption{(a) Labelling of a loop to remove $a$; (b) labelling of a loop to remove $b$}
\end{figure}

\begin{proposition}\label{one}
If $H$ is the reflexive path of length two, then he correspondence $H$-homomorphism problem is 
NP-complete.
\end{proposition}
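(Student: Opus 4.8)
The plan is to reduce from ordinary $3$-COLORING, exploiting the fact that for the reflexive path $H$ on $\{a,b,c\}$ (with edges $ab,bc$ and all three loops) the \emph{only} non-adjacent pair of vertices is $\{a,c\}$, with $b$ the central vertex adjacent to both. With this adjacency, the correspondence constraint attached to an edge $xy$ labelled $(\pi,\rho)$ forbids the pair $(f(x),f(y))$ precisely when $\{\pi(f(x)),\rho(f(y))\}=\{a,c\}$; that is, it forbids exactly the two pairs $(\pi^{-1}(a),\rho^{-1}(c))$ and $(\pi^{-1}(c),\rho^{-1}(a))$, and nothing else. So a single labelled edge excludes exactly two pairs from $V(H)\times V(H)$, and by choosing $\pi,\rho$ we may arrange those two excluded pairs to be any $\{(\alpha,\gamma'),(\gamma,\alpha')\}$ with $\alpha\neq\gamma$ and $\alpha'\neq\gamma'$.

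First I would show that a pair of \emph{parallel} edges can force $f(x)\neq f(y)$ over the whole domain $\{a,b,c\}$. (Parallel edges are allowed in inputs, and by Section~2 they may afterwards be removed without changing the complexity.) The label $(\pi,\rho)$ with $\pi=(bc)$ and $\rho$ the $3$-cycle $a\mapsto c\mapsto b\mapsto a$ excludes exactly the two diagonal pairs $(a,a)$ and $(b,b)$, since $\pi^{-1}(a)=a,\ \rho^{-1}(c)=a$ and $\pi^{-1}(c)=b,\ \rho^{-1}(a)=b$; the label $((ac),\mathrm{id})$ excludes exactly $(c,c)$ and $(a,a)$. Placing both edges between $x$ and $y$ intersects the two constraints, so the combined set of excluded pairs is exactly the diagonal $\{(a,a),(b,b),(c,c)\}$, i.e. the two parallel edges express precisely the disequality relation $f(x)\neq f(y)$.

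With this disequality gadget in hand the reduction is immediate: given a graph $G_0$, build an instance $G$ of the correspondence $H$-homomorphism problem with $V(G)=V(G_0)$, replacing every edge $uv$ of $G_0$ by the two parallel labelled edges above. Then a map $f\colon V(G_0)\to\{a,b,c\}$ is an $\ell$-correspondence homomorphism to $H$ if and only if it is a proper $3$-colouring of $G_0$. Since $3$-COLORING is NP-complete and the construction is clearly polynomial, and since any correspondence homomorphism is a polynomial certificate (so the problem lies in NP), the correspondence $H$-homomorphism problem for the reflexive path of length two is NP-complete.

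The one place that needs care — and the only genuine obstacle — is that a single labelled edge can forbid only two pairs, so disequality cannot be realized by one edge: the two parallel labels must be verified to exclude \emph{exactly} the intended diagonal pairs, with no stray off-diagonal exclusion that would over-constrain the instance. The explicit permutations above are chosen so that $\pi^{-1}(a),\pi^{-1}(c)$ and $\rho^{-1}(a),\rho^{-1}(c)$ land exactly on the desired diagonal entries, which is what must be checked. (If one prefers to avoid parallel edges altogether, the same disequality relation can be realized by a short gadget together with the usual indicator-construction bookkeeping, but the parallel-edge version is the cleanest.)
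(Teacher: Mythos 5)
Your proof is correct and takes essentially the same route as the paper: a reduction from $3$-colouring in which each edge of $G_0$ is replaced by two parallel labelled edges whose combined set of excluded pairs is exactly the diagonal $\{(a,a),(b,b),(c,c)\}$, i.e., the disequality relation. Your explicit permutations do verifiably exclude exactly those three pairs and nothing else, which is the one point requiring care (and which the paper's printed labels arguably garble, since the stated effects of $(xy)_1$ and $(xy)_2$ do not match the stated identity/involution choices).
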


\begin{proof}
Suppose $H$ is the reflexive path $ab, bc$. We reduce from $3$-colourability. Thus suppose
$G$ is an instance of $3$-colourability, and form $G'$ by replacing each edge $xy$ of $G$
by two parallel edges $(xy)_1, (xy)_2$ with permutations $(\pi_1,\rho_1)$ on $(xy)_1$ and
$(\pi_2,\rho_2)$ on $(xy)_2$. Both $\pi_1$ and $\rho_1$ are identity permutations, both 
$\pi_2$ and $\rho_2$ are the involutions exchanging $a$ and $b$. They are illustrated in 
Figure 5. The effect of $(xy)_1$ is that if $x$ maps to $a$, then $y$ maps to either $b$ or 
$c$ but not $a$, and if $x$ maps to $c$ then $y$ maps to either $a$ or $b$ but not $c$. 
The effect of the second edge $(xy)_1$ does not preclude any of these possible images, 
but also restricts $b$ to map to $a$ or $c$ but not $b$. Thus a correspondence homomorphism 
of $G'$ is a $3$-colouring of $G$ with the colours $a, b, c$. The converse is also easy to see.
\end{proof}

\begin{figure}[hhhh]
\includegraphics[height=5cm]{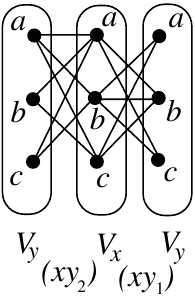}
\caption{Labelling of parallel edges $(xy)_1$ and $(xy)_2$}
\end{figure}

This completes the proof of Theorem \ref{main}.

\section{The General Results}

We have similar results for general graphs $H$, where some vertices may have loops and
others not. Note that in this case isolated loopless vertices can be removed from $H$ without 
affecting the complexity of the correspondence $H$-homomorphism problem or list 
$H$-homomorphism problem.

\begin{theorem}\label{mixed}
Let $H$ be a graph with possible loops. Suppose moreover that if $H$ has both a vertex
with a loop and a vertex without a loop, then it has no isolated loopless vertices.

\vspace{2mm}

The following cases of the correspondence $H$-homomorphism problem are polynomial-time 
solvable.

\begin{enumerate}
\item
$H$ is a reflexive clique
\item
$H$ is a reflexive co-clique
\item
$H$ is a reflexive $2K_2$
\item
$H$ is an irreflexive $pK_2 \cup qK_1$
\item
$H$ is an irreflexive $K_{2,2}$
\item
$H$ is a star in which the center has a loop and the other vertices do not
\item
$H$ is an irreflexive $pK_2$ together with a disjoint reflexive $qK_1$.
\end{enumerate}

Otherwise, the correspondence $H$-homomorphism problem is NP-complete.
\end{theorem}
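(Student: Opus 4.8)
The plan is to split $H$ according to its loops into three regimes: the fully reflexive case, already settled by Theorem~\ref{main}; the irreflexive case; and the genuinely mixed case, where $H$ has both looped and loopless vertices. In every regime I would reuse the toolkit of Sections 2--4. The auxiliary graph $G^*$ and the associated bipartite graph $H'$ let me read off the local constraint imposed between two adjacent fibres $V_x, V_y$; Proposition~\ref{indi} collapses a connected target of large diameter to one of diameter two; and loop-addition gadgets in the spirit of Proposition~\ref{last}, Proposition~\ref{two}, and Proposition~\ref{one} forbid chosen images and thereby excise a hard induced subgraph of $H$. Since an isolated loopless vertex can never be used at the endpoint of a satisfied edge, it can be deleted without affecting the complexity, which is exactly the normalisation assumed in the hypothesis.

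For the new polynomial cases I would give an explicit algorithm for each family, and they fall into two types. The first type is what I would call \emph{unique-neighbour} targets: the reflexive co-clique, the irreflexive $pK_2 \cup qK_1$, and the irreflexive $pK_2$ together with a reflexive $qK_1$. In all of these every vertex of $H$ has at most one neighbour once a loop is counted as a self-neighbour, so fixing the image of one vertex of a connected component of $G$ forces the image of every other vertex by unit propagation; one then tries all images of a single vertex per component and checks consistency, which is polynomial. The reflexive clique is the opposite trivial extreme, where every choice is always admissible. The second type is the \emph{Gaussian} targets $2K_2$ and $K_{2,2}$: encoding each vertex of $G$ by two $\{0,1\}$ coordinates naming its chosen image, the single relevant coordinate bit of $\pi(v)$ is a balanced Boolean function of the coordinates of $v$, hence affine, so each edge becomes a linear equation modulo two and the instance is solved by Gaussian elimination exactly as in Section 3. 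The loop-centred star is handled separately: any vertex sent to a leaf forces all its neighbours to the centre, so the instance decomposes and is again solvable by propagation. Checking that the permutation labels preserve the propagation or linear structure in each family is routine but must be carried out.

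For NP-completeness I would argue by reduction, seeding from the reflexive hard cases $K_1 \cup K_2$, $K_1 \cup K_1 \cup K_2$, and $2K_3$, and from the dichotomy for the bipartite correspondence problem that is stated alongside this theorem. A disconnected $H$ that is not a union of cliques is handled by combining Proposition~\ref{indi} with the union-of-cliques analysis of Section 4, while a connected $H$ is first pushed to diameter two by the square reduction and then cut down to one of the seeds by deleting vertices with loop-addition gadgets as in Proposition~\ref{two} and Proposition~\ref{one}. In the irreflexive and mixed regimes the gadgets instead come from the bipartite version, whose parallel-edge and subdivision constructions encode $1$-\textsc{in}-$3$-\textsc{sat} or $3$-colourability directly.

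The hard part will be the structural classification in the mixed regime: proving that every $H$ carrying both loops and non-loops, after the isolated loopless vertices are removed, is either the loop-centred star, the irreflexive-$pK_2$-with-reflexive-$qK_1$ graph, or contains a configuration from which a seed problem reduces. This is delicate precisely because the looped and loopless parts interact through the associated bipartite graph $H'$, and a single extra edge joining the two kinds of vertex can destroy the unique-neighbour or affine structure that makes the tractable cases work; isolating exactly which interactions stay propagation- or Gaussian-solvable and which become NP-complete is where the bulk of the case analysis lives, and is the reason the complete argument is only sketched in the Appendix.
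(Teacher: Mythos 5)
Your high-level plan is aligned with what the paper actually does: Theorem~\ref{main} for reflexive $H$, the square reduction of Proposition~\ref{indi}, loop-addition gadgets in the style of Propositions~\ref{last}--\ref{one}, propagation for the ``unique-neighbour'' targets, and Gaussian elimination for $2K_2$ and $K_{2,2}$ (your observation that a balanced Boolean function of two variables is affine is a legitimate shortcut for the $K_{2,2}$ case). But the proposal stops exactly where the proof begins. The entire content of the ``otherwise NP-complete'' direction is the exhaustive structural classification, and you defer it wholesale (``is where the bulk of the case analysis lives''). Concretely, two things are missing. First, the mechanism that connects the mixed regime to the bipartite results: the paper replaces a mixed $H$ by its \emph{associated bipartite graph} $H'$ (each vertex $w$ split into $u\in U$, $v\in V$; each edge $ww'$ giving $uv'$ and $vu'$) and shows that NP-completeness of the \emph{by-side} problem for $H'$ transfers back to $H$; you invoke ``the bipartite version'' but never say how a non-bipartite mixed $H$ is converted into an instance of it. Second, the classification of the by-side bipartite case itself, which is where the paper does its real work: ordering each side by domination, normalising degrees with parallel-edge gadgets so that $a_1=a_2$ and $a_3=a_k$, and then reading off either a $3$-SAT reduction (when $a_2>a_k$) or a $2$-SAT algorithm (when $a_1>a_2=a_k$), plus the separate treatment of induced $P_5$'s and $6$-cycles. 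Without this, the claim that every non-exceptional $H$ ``contains a configuration from which a seed problem reduces'' is an assertion, not an argument; in particular nothing in your outline rules out further polynomial cases hiding in the mixed regime.

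A smaller but genuine error: your algorithm for case (6), the loop-centred star, is misdescribed. An edge $xy$ with label $(\pi,\rho)$ imposes the constraint $f(x)=\pi^{-1}(c)\ \vee\ f(y)=\rho^{-1}(c)$ (where $c$ is the centre); the designated vertex $\rho^{-1}(c)$ need not be the centre, it varies from edge to edge, and the constraint is a disjunction, so the instance does not ``decompose'' and is not solved by unit propagation from a single seed. It is solved by a $2$-SAT encoding (one literal per statement ``$f(x)$ equals the vertex designated by this edge-end,'' plus exclusion clauses), which is exactly the device the paper uses for the analogous by-side case (4). The conclusion is unaffected, but as written the algorithm you describe for this case would fail.
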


For the correspondence $H$-list-homomorphism problem, the classification is
the same, except the cases (3) and (5) are NP-complete.

\begin{theorem}\label{listmixed}
Let $H$ be a graph with possible loops. Suppose moreover that if $H$ has both a 
vertex with a loop and a vertex without a loop, then it has no isolated loopless vertices.

Then the correspondence $H$-list homomorphism problem is polynomial-time 
solvable in cases (1, 2, 4, 6, 7), and is NP-complete otherwise.
\end{theorem}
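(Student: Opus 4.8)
The plan is to treat Theorem~\ref{listmixed} as a perturbation of Theorem~\ref{mixed}: the list classification differs only in that the two ``Gaussian elimination'' cases, reflexive $2K_2$ (case 3) and irreflexive $K_{2,2}$ (case 5), move from polynomial to NP-complete. The organizing tool will be a \emph{forbid-a-vertex} reduction, which I would isolate as a lemma: for any vertex $v$ of $H$, the correspondence $(H-v)$-list-homomorphism problem reduces to the correspondence $H$-list-homomorphism problem. Given an instance $(G,\ell)$ over $H-v$, extend every permutation label to a permutation of $V(H)$ that fixes $v$, and set all lists to $V(H)\setminus\{v\}$. Since the permutations fix $v$ and no vertex may map to $v$, each constraint $\pi(f(x))\rho(f(y))\in E(H)$ forces $\pi(f(x)),\rho(f(y))\in V(H)\setminus\{v\}$, so it is exactly an edge of $H-v$; conversely any solution over $H-v$ lifts trivially. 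This lemma is precisely the power that lists supply and that plain homomorphisms lack (without lists one cannot prevent the permutations from mapping onto $v$), and it is what makes the list problem strictly harder on cases (3) and (5).

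For the NP-complete direction I would argue in two steps. First, every $H$ that is already NP-complete in Theorem~\ref{mixed} stays NP-complete for lists, since taking all lists equal to $V(H)$ reduces the correspondence $H$-homomorphism problem to its list version; this disposes of every $H$ outside cases (1)--(7). It remains to upgrade the two extra cases, and here I apply the forbid-a-vertex lemma. Deleting one vertex of the reflexive $2K_2$ yields the reflexive $K_1\cup K_2$, whose homomorphism problem is NP-complete by the Proposition in Section~4, so the list problem for reflexive $2K_2$ is NP-complete -- this is exactly the argument already indicated in the main text. Deleting one vertex of the irreflexive $K_{2,2}$ yields the irreflexive path on three vertices $P_3$, which is none of the polynomial cases (1)--(7) and is therefore NP-complete by Theorem~\ref{mixed}; the lemma then gives NP-completeness of the list problem for irreflexive $K_{2,2}$.

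For the polynomial direction I would show that the five classes (1), (2), (4), (6), (7) admit polynomial list algorithms. Cases (1) and (2) are immediate and were already observed in Section~3 to work verbatim with lists. For (4), (6), (7) the key point is that the polynomial algorithms underlying Theorem~\ref{mixed} are \emph{not} the $GF(2)$-Gaussian-elimination procedure used for (3) and (5), but propagation/matching-type procedures into which a list enters merely as a unary restriction of a vertex's domain; I would verify that each such algorithm still runs in polynomial time after the domains are pruned by the lists. The clean statement I am aiming for is that, within the polynomial cases of Theorem~\ref{mixed}, the list problem stays polynomial exactly for the classes closed under vertex deletion (after discarding isolated loopless vertices) -- namely (1), (2), (4), (6), (7) -- while (3) and (5) escape their class under a single deletion and hence become NP-complete.

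The main obstacle will be this polynomial direction for (4), (6), (7): I must exhibit, or adapt from Theorem~\ref{mixed}, algorithms that are genuinely list-robust, i.e. that do not secretly rely on the linear-algebraic structure that lists destroy. Establishing that these algorithms survive the addition of arbitrary lists -- together with the consistency check that no single vertex-deletion of (4), (6), (7) leaves the polynomial classes, which is what guarantees the forbid-a-vertex lemma cannot be turned against them -- is the delicate part. The NP-completeness side, by contrast, is essentially a packaging of the forbid-a-vertex lemma with Theorem~\ref{mixed}.
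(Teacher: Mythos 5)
Your proposal is correct and, at its core, runs on the same engine as the paper: the observation that full lists reduce the plain problem to the list problem (so all NP-complete cases of Theorem~\ref{mixed} persist), plus the fact that lists let you delete a vertex of $H$ -- which is exactly the paper's remark in Section~4 that ``lists may be used to restrict the input vertices never to use one of the four vertices of $2K_2$,'' there applied to derive the list-NP-completeness of reflexive $2K_2$ from $K_1\cup K_2$. Your explicit forbid-a-vertex lemma (extend each permutation to fix $v$, restrict all lists to $V(H)\setminus\{v\}$) is a clean formalization of that remark and is sound. The one place you genuinely diverge is the irreflexive $K_{2,2}$: you delete a vertex to reach $K_{1,2}$ and invoke Theorem~\ref{mixed} for its plain problem, whereas the paper's Appendix passes to the square of $K_{2,2}$ (reflexive $2K_2$) and invokes the \emph{list} part of Theorem~\ref{main}; both chains bottom out at the NP-completeness of reflexive $K_1\cup K_2$, just composing the squaring and vertex-forbidding moves in the opposite order, and yours is arguably the more modular packaging. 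On the polynomial side you leave (4), (6), (7) as a claim to be verified; this matches the level of detail of the paper's sketch (``easily checked to be polynomial even with lists''), but to close the argument you should record the actual reasons: in cases (4) and (7) every vertex of $H$ has at most one neighbour distinct from itself, so the image of one vertex propagates deterministically along each component of $G$ and lists are just extra unary filters; in case (6) every edge of $H$ contains the looped centre $c$, so each labelled edge of $G$ becomes the clause $\pi(f(x))=c\ \vee\ \rho(f(y))=c$ and the whole instance is a $2$-SAT system into which lists enter as unit clauses. Your ``closed under vertex deletion'' heuristic for why exactly (3) and (5) fall out of the polynomial classes is a nice organizing observation, though it is a consistency check rather than a proof of polynomiality.
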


Note that the graphs in cases (1-3) are reflexive, in cases (4-5) irreflexive, and
cases (6-7) mix loops and non-loops.

In the process of proving Theorem \ref{mixed}, we also classified the complexity 
of a bipartite version of the correspondence $H$-homomorphism problems and 
(list)-$H$-homomorphism problems. Specifically, we assume that $H$ is a bipartite
graph with a set of {\em black vertices} and a disjoint set of {\em white vertices}.
The {\em by-side correspondence $H$-homomorphism problem} asks whether an
input bipartite graph $G$ (with edge-labeling $\ell$) admits an $\ell$-correspondence
homomorphism to $H$ taking black vertices of $G$ to black vertices of $H$ and
white vertices of $G$ to white vertices of $H$. The {\em by-side correspondence 
$H$-list homomorphism problem} asks whether an input bipartite graph $G$ (with 
edge-labeling $\ell$ and lists $L(x), x \in V(G)$, such that for black vertices $x$ the
lists $L(x)$ contain only black vertices, and similarly for white vertices) admits an 
$\ell$-correspondence list homomorphism to $H$.

\begin{theorem}\label{bipa}
Let $H$ be a bipartite graph. Then the by-side correspondence $H$ homomorphism 
problem is polynomial-time solvable in case (4) above, as well as
\begin{enumerate}
\item[(8)]
$H$ is a complete bipartite graph plus any number of isolated vertices,
\item[(9)]
$H$ is a tree of diameter $3$ plus any number of isolated vertices,
\item[(10)]
$H$ consists of two disjoint copies of $K_{1,2}$ with white leaves, plus 
any number of black isolated vertices,
\item[(11)]
$H$ consists of two disjoint copies of $K_{2,2}$.
\end{enumerate}
In all other cases it is NP-complete.
\end{theorem}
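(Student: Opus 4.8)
The plan is to prove Theorem \ref{bipa} as a dichotomy in exactly the spirit of Theorem \ref{main}: first exhibit polynomial-time algorithms for the listed targets, namely case (4) together with (8)--(11), and then show by a structural case analysis on $H$ that every remaining bipartite target reduces to one of a handful of minimal NP-complete cases. Throughout I would work with the auxiliary-graph picture $G^*$ of the introduction, specialised to the by-side setting, so that each black vertex of $G$ carries a private copy of the black part of $V(H)$ and each white vertex a copy of the white part, and an $\ell$-correspondence homomorphism becomes a side-respecting transversal. I would also want the freedom to use parallel edges and vertex-deletion labels; here one must already be careful, since the clique/majority reduction of Section 2 is not available in the bipartite world, so the loop- and parallel-edge elimination has to be redone with bipartite-compatible gadgets (a parallel-pair expander respecting the two sides, and vertex deletions realised by loop labels whose second permutation is allowed to interchange the two sides).

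For the polynomial side I would separate four algorithmic paradigms. Case (8), a complete bipartite core together with isolated vertices, is the analogue of the reflexive clique: on the core every side-respecting choice satisfies every edge, and the only constraints are that, edge by edge, certain images are forbidden at the endpoints (those sent off the core or onto the wrong side); intersecting these forbidden sets per vertex decouples the instance entirely, so a solution exists iff each vertex keeps a legal image. Case (4), a matching plus isolated vertices, is the analogue of the reflexive co-clique: each matched edge has a unique cross-neighbour, so fixing the image of one vertex in a connected component of $G$ forces all others by propagation, and one tries all starting images. The decisive algebraic case is (11), which plays the role that reflexive $2K_2$ played in Section 3: encode the image of each black (resp.\ white) vertex by two bits over $\mathrm{GF}(2)$, a component bit and a within-component bit. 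Because any permutation of the four vertices on a side carries the fixed two-plus-two component partition to one of the only three pairings of four elements, and these three pairings are precisely the three nonzero linear forms on $\mathrm{GF}(2)^2$, each edge of $G$ imposes a single affine equation matching the two component bits, while the within-component bits stay free; the whole instance is then solved by Gaussian elimination. Cases (9) and (10), the mixed small trees, I expect to fall to the same two ideas: forced propagation along pendant edges (a leaf forces its neighbour onto the unique centre), together with an affine $\mathrm{GF}(2)$ description of the two-leaf and two-centre choices in (10), leaving a linear system with per-vertex list restrictions from the isolated vertices.

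For the hardness side I would first assemble the by-side versions of the three gadgets of Section 4: a loop label that deletes a chosen vertex of $H$ (the analogue of Proposition \ref{last}, now built from an identity against an involution that may swap the two sides), a parallel-pair gadget forcing simultaneous realisation of two labels, and an indicator construction. The indicator needs the most care, because the reflexive square step of Proposition \ref{indi} leaves the bipartite class; in its place I would use a parity-preserving replacement of an edge of $G$ by a short fixed bipartite gadget whose admissible end-pairs trace out exactly the desired binary relation on one side of $H$ while keeping $G$ bipartite and side-consistent. With these tools I would seed the hardness from 1-IN-3-SAT and from 3-colourability, exactly as in the $K_1\cup K_2$ proposition and Proposition \ref{one}, and pin down a short list of minimal NP-complete bipartite targets; the natural candidates are $3K_{2,2}$ (a third component making the component-matching into a 1-IN-3-type constraint), a union such as $K_{2,2}\cup K_2$ that mixes component sizes, and any connected bipartite graph that is neither complete bipartite nor a diameter-$3$ tree.

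The main obstacle I anticipate is the structural case analysis itself: proving that every bipartite $H$ outside (4) and (8)--(11) admits a sequence of side-preserving vertex-deletions and indicator steps terminating at one of the minimal hard targets. This is where the by-side restriction bites hardest, since the usual freedom to pass to a power of $H$ collapses the bipartition and is unavailable, so all reductions must stay inside the bipartite class and must make do with the clique-free gadget set flagged above. I would organise the argument by the number and types of connected components of $H$ (complete bipartite, diameter-$3$ tree, or richer), mirroring the union-of-cliques analysis that closed the proof of Theorem \ref{main}, and I expect the bookkeeping of that case split, rather than any single gadget, to be the crux.
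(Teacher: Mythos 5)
Your polynomial-side sketch is broadly compatible with the paper's (your direct $\mathrm{GF}(2)$ encoding of case (11) is a legitimate alternative to the paper's observation that $2K_{2,2}$ is a side-flipped double cover of $K_{2,2}$, so that the by-side problem collapses to the ordinary correspondence $K_{2,2}$ problem already solved by Gaussian elimination), but there is a genuine gap on the hardness side, and it sits exactly where you yourself locate the ``crux''. Declaring that the minimal hard targets include ``any connected bipartite graph that is neither complete bipartite nor a diameter-$3$ tree'' is not a reduction scheme; it is a restatement of the theorem for connected $H$, and nothing in your plan says how such an $H$ is transformed into a seeded hard instance. The paper's proof of precisely this step rests on a structural dichotomy you never formulate: if $H$ contains an induced path $uvu'v'u''$, the problem is NP-complete (via the length-two-path indicator realizing $H^2\cap(U\times U)$ as a reflexive target for Theorem \ref{main}, with a separate and delicate sub-analysis when $u$ and $u''$ have a common neighbour and one must peel edges off the resulting $6$-cycle using parallel edges labelled by the transposition $(u\,u')$); and if $H$ has no induced $P_5$, then the neighbourhoods on each side are linearly ordered by inclusion, which reduces the classification to a trichotomy on the degree sequences $a_1\ge\dots\ge a_k$ and $b_1\ge\dots\ge b_{\ell}$: complete bipartite ($a_1=a_k$), double star ($a_1>a_2=a_k$ and $b_1>b_2=b_{\ell}$, polynomial via $2$-SAT), or $a_2>a_k$, where neighbourhood-trimming by parallel edges produces clauses $y=b_i\vee y'=b_{i'}\vee y''=b_{i''}$ and hence a reduction from $3$-SAT. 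None of this is recoverable from ``organise by the number and types of components'', and your proposed seeds ($3K_{2,2}$, $K_{2,2}\cup K_2$) do not cover the connected case at all.

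A secondary problem: your algorithm for case (9) does not work as stated. For a diameter-$3$ tree with centres $u_1,v_1$, an edge $xy$ of $G$ labelled $(\pi,\rho)$ imposes the constraint $\pi(f(x))=u_1\ \vee\ \rho(f(y))=v_1$; a leaf image forces the neighbour onto a centre, but a centre image leaves the neighbour unconstrained, so the constraint is not functional and ``guess one image per component and propagate'' does not decide the instance. The correct formulation (and the paper's) is a $2$-SAT instance with clauses $x_i\vee y_j$ together with at-most-one constraints per vertex; there is also no evident $\mathrm{GF}(2)$ structure when the two sides of the double star have different numbers of leaves. The same caution applies to your unexamined claim that case (10) ``falls to the same two ideas''.
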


For the list version we have the following result.

\begin{theorem}\label{listbipa}
Let $H$ be a bipartite graph. Then the by-side correspondence $H$-list homomorphism 
problem is polynomial-time solvable in cases (4, 8, 9) above, and is NP-complete otherwise.
\end{theorem}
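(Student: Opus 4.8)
The plan is to build directly on Theorem~\ref{bipa}. Since the by-side correspondence $H$-homomorphism problem is exactly the list problem in which every list equals the whole (appropriately coloured) vertex set, every target $H$ that is NP-complete in Theorem~\ref{bipa} is NP-complete here as well. Thus only the five polynomial cases of Theorem~\ref{bipa}, namely (4), (8), (9), (10) and (11), require fresh attention. The organizing observation is that lists can restrict the input to any \emph{induced subgraph} of $H$ that respects the black/white bipartition: by omitting a vertex from every list one forces all solutions to avoid that vertex of $H$. Hence a necessary condition for the list problem to be polynomial is that the non-list problem be polynomial for every bipartition-respecting induced subgraph of $H$. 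This single criterion separates the five cases: (4), (8) and (9) are hereditarily polynomial in this sense, whereas (10) and (11) contain induced subgraphs whose non-list problems are NP-complete.

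I would treat the NP-complete cases first, as they carry the new content. For case (11), $H = 2K_{2,2}$, deleting one black and one white vertex from a single $K_{2,2}$ component (by omitting them from all lists) leaves the target $K_{1,1} \cup K_{2,2}$; for case (10), $H = 2K_{1,2}$ with white leaves, deleting one white leaf from a single component (and, if present, the black isolated vertices) leaves $K_{1,1} \cup K_{1,2}$. Neither $K_{1,1} \cup K_{2,2}$ nor $K_{1,1} \cup K_{1,2}$ occurs among the polynomial targets (4)--(11), so by Theorem~\ref{bipa} each already has an NP-complete by-side correspondence homomorphism problem. Extending the permutation labels of such an instance to fix the deleted vertices, and choosing the lists to be the surviving colour classes, yields a correspondence-preserving reduction because the deleted subgraph is induced. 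This is precisely the device used to deduce NP-completeness of the reflexive $2K_2$-list problem in Section~4, now transported to the bipartite setting.

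For the polynomial cases I would verify that the algorithms behind Theorem~\ref{bipa} tolerate arbitrary unary restrictions. In case (4), where $H$ is a matching together with isolated vertices, fixing the image of one vertex of a component of $G$ propagates through the permutation labels to force the image of every neighbour, so a consistency check suffices; lists only shrink the per-vertex domains. In case (8), where $H$ is complete bipartite plus isolated vertices, each edge constraint $\pi(f(x))\rho(f(y)) \in E(H)$ merely requires the two images to lie on opposite sides of the complete bipartite part, a two-valued constraint solvable as a $2$-colouring or $2$-SAT instance even after lists are imposed. Case (9), the double star, is structural in the same way, and its bipartition-respecting induced subgraphs are again double stars, stars, single edges, or isolated vertices, all falling under cases (4), (8) or (9) and hence polynomial; one checks that the algorithm reads off a solution (or reports infeasibility) in the presence of lists.

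The conceptual crux, and the point I would guard most carefully, is the tightness of the split: one must confirm that the deletions in (10) and (11) genuinely reach NP-complete targets of Theorem~\ref{bipa}, while no bipartition-respecting deletion in (4), (8) or (9) ever leaves such a target, so that these three remain polynomial. On the algorithmic side the heaviest bookkeeping is case (9), where the interaction of the two centres, the leaves, the isolated vertices, and the lists must be handled explicitly; I expect this, rather than the reductions, to be the main technical obstacle.
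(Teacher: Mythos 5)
Your proposal is correct, and its overall architecture matches the paper's: NP-hardness is inherited from Theorem~\ref{bipa} wherever the non-list problem is already NP-complete, the algorithms for cases (4), (8), (9) are checked to survive the addition of lists (propagation for the matching case, essentially unary side-constraints for the complete bipartite case, and $2$-SAT-type clauses of the form $x=a_1\vee y=b_1$ for the diameter-$3$ tree), and the only genuinely new content is the hardness of (10) and (11) under lists. Where you diverge is in how that last point is established. The paper goes through the square: for $2K_{2,2}$, and for the white side of the two $K_{1,2}$'s in case (10), the reflexive graph $H^2$ restricted to one side is a reflexive $2K_2$, whose \emph{list} problem is NP-complete by Theorem~\ref{main}, and the subdivision (indicator) reduction of Proposition~\ref{indi} transfers this to the by-side list problem for $H$. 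You instead use the lists directly to delete vertices of $H$ --- one black and one white vertex of one $K_{2,2}$ component, or one white leaf of one $K_{1,2}$ --- landing on the induced targets $K_2\cup K_{2,2}$ and $K_2\cup K_{1,2}$, which Theorem~\ref{bipa} already classifies as NP-complete without lists; your remark that the permutations must be extended so as to fix the deleted vertices is exactly the point that makes the induced-subgraph restriction legitimate. Both routes are sound. Yours avoids re-invoking the squaring construction and leans on Theorem~\ref{bipa} as a black box; if one unwinds that citation, the hardness of $K_2\cup K_{2,2}$ and $K_2\cup K_{1,2}$ is itself obtained from $H^2$ restricted to one side being a reflexive $K_1\cup K_2$, so the two arguments converge at the bottom. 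The one place to be careful, which you flag yourself, is verifying that no side-respecting deletion inside cases (4), (8), (9) can produce such a hard target --- this is needed only as a consistency check, since you give direct list-tolerant algorithms for those cases, as does the paper.
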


In the Appendix, we sketch the proofs for Theorems \ref{mixed}, \ref{listmixed}, \ref{bipa}, and \ref{listbipa}. 

\section*{Acknowledgement}
The second author wishes to acknowledge the research support of NSERC Canada, through a Discovery Grant.

\hspace{2mm}

\newpage

\section{Appendix}

We sketch the proofs of all statements in Theorems \ref{mixed}, \ref{listmixed}, \ref{bipa}, 
and \ref{listbipa}, by case analysis.

{\bf Suppose $H$ is a bipartite graph.} 
If $H$ contains a connected component 
$C$ with $|C \cap U |\geq 3$ or $|C \cap V| \geq 3$, say the former. Then $H^2$ is 
reflexive and contains at least the two components on $C \cap U$ and $C \cap V$. 
By a reduction from Theorem \ref{main} the correspondence $H$-homomorphism 
problem is NP-complete.
If $H$ contains at least two components with at least one edge each, then $H^2$
has at least four components which must all be loops or isolated vertices, else the 
correspondence $H$-homomorphism problem is again NP-complete by Theorem 
\ref{main}. This covers all $H$ except case (4), which is easily solved by a trivial 
algorithm (similar to the case of a reflexive co-clique in Theorem \ref{main}), even 
with lists.

If $H$ contains only one component $C$ other than an isolated vertex, then the 
remaining cases are (a) $C$ is a path $uvu'$, (b) $C$ is a path $uvu'v'$, and (c) 
$C$ is $K_{2,2}$. In case (a) $H^2$ contains looped components of size one and 
two, which has NP-complete correspondence $H$-homomorphism problem by 
Theorem \ref{main}. In case (b) we introduce a loop with the permutations $(uvu'v')$ 
and identity, which eliminate only $v'$, since $v'u$ is not an edge of $H$, leaving us 
with the path $uvu'$ which has NP-complete correspondence $H$-homomorphism 
problem by case (a). (See Proposition \ref{last} for a similar technique of loop addition.) 
In case (c) $H^2$ has two reflexive components of size two plus isolated vertices. 
An isolated vertex can simulate lists in $C$, so either with lists or with at least one 
isolated vertex, case (c) is NP-complete by Theorem \ref{main}.

Finally, the case (5) $H=K_{2,2}$ the correspondence $H$-homomorphism problem
is polynomial again by Gaussian elimination. If we view $H$ as $K_{2,2}$ with parts 
$\{00,01\},\{10,11\})$, then we have the adjacency $(xy)(zt)$ ifand only if $z=x+1$ 
modulo $2$, and for a permutation $\rho$  we have $\rho(xy)=zt$ if and only if 
$zt=\rho(00)+(\rho(10)-\rho(00))(xx)+(\rho(01)-\rho(00))(yy)$ modulo $2$.
Thus the problem again reduces to linear equations modulo $2$.

{\bf We now consider the by-side problems}.
Suppose $H$ contains an induced path $uvu'v'u''$.
We show that the by-side correspondence $H$-homomorphism problem
is NP-complete. If $uu''$ is not an edge
in $H^2$, then the corresponding component $H^2\cap (U\times U)$ is reflexive
but not complete, so the problem is NP-complete by Theorem \ref{main};
otherwise there is a path $u''v''u$, giving a cycle $uvu'v'u''v''$ plus
possibly the edge $u'v''$. Consider the permutation $\rho=(uu')$ and join
two variables $x, y$ by an edge labelled by two identity permutations, and 
another parallel edge labelled by $(\rho,\rho)$.
If the edge $u'v''$ was present, then the edge $u'v'$ will be removed,
and we still have the induced path $v'u''w''uv$, and we can repeat the
process. Otherwise $u'v''$ is absent. If there is a vertex $w\neq u'$
and a path $vwv'$, there will again remain the induced path $uvwv'u''$, and
we can repeat. Otherwise no vertex is adjacent to the two vertices
of the $6$-cycle. If there is an edge $vw$, with $w$ not in the $6$-cycle,
then the edge $wu''$ is in $H^2$, so there is a path $vww'u''$, and
there will remain the induced path $vww'u''v'$, again we can repeat. 
Finally if the component is just the $6$-cycle, there will remain
at least components $uvu'$, $v'u''v''$, which for $H^2\cap (U\times U)$
gives components of $uu'$,$u''$ for which the correspondence 
$H$-homomorphism problem is NP-complete from Theorem \ref{main}.

In the remaining case, for each component $C$,
if $u, u'\in U\cap C$ and $v, v'\in V\cap C$, then one of $u, u'$
dominates the other (similarly one of $v, v'$ dominates the other).
Otherwise there exists an induced subgraph $uv, u'v'$ and a path $uwu'$,
giving an induced path $vuwu'v'$.
Order the vertices $u_1,u_2,\ldots, u_k$ and $v_1,v_2,\ldots, v_{\ell}$
so that $u_i$ dominates $u_{i+1}$, and $v_j$ dominates $v_{j+1}$.
Let $a_i,b_j$ be the degrees of $u_i,v_j$ respectively.
If $a_2 > a_k$, the by-side correspondence $H$-homomorphism problem is NP-complete. 
If $H$ has more that one component, then $H^2$ has a component of size three or 
more, so NP-completeness follows as in the reflexive case. We thus assume $H$ 
is connected. We may assume $a_1=a_2$ and $a_3=a_k$. Indeed, if we let $\pi$ 
be the identity and $\rho=(u_i u_{i'})$ with $i<i'$ and in parallel another edge with 
both $\pi \rho$ the identity, this reduces the neighbors of $u_i$ to those of $u_{i'}$. 
Note that $b_1>b_{\ell}$, and we may similarly assume $b_2=b_{\ell}$.
Then $xy\in E(H)$ if and only if $x\in\{a_1,a_2\}\vee y=b_i$. By permutation, 
we also get $x\in\{a_1,a_3\}\vee y'=b_{i'}$ and $x\in\{a_2,a_3\}\vee y'=b_{i''}$.
This gives $y=b_i\vee y'=b_{i'}\vee y''=b_{i''}$ with $i,i',i''\in\{1,2\}$. Thus there is 
a natural reduction from $3$-satisfiability, and the by-side correspondence 
$H$-homomorphism problem is NP-complete. 

Assume $H$ is connected. Then the remaining cases are
(3) $a_1=a_k$ and $b_1=b_{\ell}$, and  (4) $a_1>a_2=a_k$ and $b_1>b_2=b_{\ell}$.
In both these cases the by-side correspondence $H$-list-homomorphism problem 
is polynomial. For (3), any choice from the lists solves the problem.
For (4), $xy\in E(H)$ iff $x=a_1\vee y=b_1$.
By permutation we get boolean clauses $x_i\vee y_j$
where $x_i$ stands for $x=a_i$
and $y_j$ stands for $y=b_j$. We also add clauses
$\overline{x_i}\vee\overline{x_{i'}}$ for $i\neq i'$
$\overline{y_j}\vee\overline{y_{j'}}$ for $j\neq j'$.
Solve this instance of $2$-satisfiability. This gives a partial assignment
$x=a_i$, $y=b_i$ to variables $x, y$; the remaining variables may be
assigned arbitrarily.

If $H$ is not connected, in case (4) the by-side problem follows from the bipartite
version above, which allows lists. Cases (8),(9) are still polynomial with lists; adding 
an isolated vertex to either or both $U, V$ just simulates lists on the appropriate side.
By Theorem \ref{main} applied to $H^2$, these three are the only polynomial 
by-side cases with lists or or those without lists but with only one component
having more than one vertex.

The remaining by-side cases have $H$ having two components plus isolated
vertices, where the components are (*) either $K_{2,2}$, or a path $P_4$ on
four vertices, with no isolated vertices; and (10) two $K_{1,2}$ (with white leaves)
plus isolated vertices on the black side. Any $P_4$ can have one side permuted 
in parallel with the $P_4$, giving $K_{1,2}$ plus one isolated vertex, making this
case of (*) NP-complete. Suppose $H$ is the disjoint union of $H_1$, and $H_2$ 
obtained from $H_1$ by flipping the two sides as a by-side correspondence 
$H$-homomorphism problem. Then $H_1$ can be viewed as a standard 
(not by-side) correspondence $H$-homomorphism problem, which if polynomial 
makes the by-side correspondence $H$-homomorphism problem polynomial. 
Thus the remaining case (11) of two $K_{2,2}$, is polynomial for the by-side 
correspondence $H$-homomorphism problem. Finally (10) is polynomial with 
one boolean variable for the black vertices and two boolean variables for the 
white vertices.

{\bf For graphs $H$ that have both loops and non-loops,} 
we may assume $H$ has no isolated loopless 
vertices. We create a bipartite graph $H'=(U,V,E)$ 
that has for each edge $ww'$ in $H$, two edges $uv'$ and $vu'$. If the 
by-side problem (with or without lists) for $H'$ is NP-complete, then so is 
the corresponding problem for $H$. The only cases where both loops and 
non-loops are present are (6, 7), both easily checked to be polynomial even with lists.

\end{document}